\def\setcircle#1{\def\rrr{#1}}
\def\Black(#1,#2){\put(#1,#2){\circle*{\rrr}}}
\def\White(#1,#2){\put(#1,#2){\circle{\rrr}}}
\newcommand{\bbC}{\mathbb{C}}
\newcommand{\bbZ}{\mathbb{Z}}
\newcommand{\bbR}{\mathbb{R}}
\newcommand{\cL}{\mathcal{L}}
\newcommand{\cX}{\mathcal{X}}
\newtheorem{theorem}{Theorem}
\newtheorem{proposition}[theorem]{Proposition}
\newtheorem{lemma}[theorem]{Lemma}
\newtheorem{corollary}[theorem]{Corollary}
\newtheorem{definition}[theorem]{Definition}
\theoremstyle{remark}
\begin{document}

\title{What is integrability of discrete variational systems?}
\author{Raphael Boll, Matteo Petrera, Yuri B. Suris}
\maketitle

\begin{center}
{
Institut f\"ur Mathematik, MA 7-2, Technische Universit\"at Berlin, \\
Str. des 17. Juni 136, 10623 Berlin, Germany\\
E-mail: {\tt boll, petrera, suris@math.tu-berlin.de}
}
\end{center}

\begin{abstract}
We propose a notion of a pluri-Lagrangian problem, which should be understood as an analog of multi-dimensional consistency for variational systems. This is a development along the line of research of discrete integrable Lagrangian systems initiated in 2009 by Lobb and Nijhoff, however having its more remote roots in the theory of pluriharmonic functions, in the Z-invariant models of statistical mechanics and their quasiclassical limit, as well as in the theory of variational symmetries going back to Noether. A $d$-dimensional pluri-Lagrangian problem can be described as follows: given a $d$-form $\cL$ on an $m$-dimensional space (called multi-time, $m>d$), whose coefficients depend on a sought-after function $x$ of $m$ independent variables (called field), find those fields $x$ which deliver critical points to the action functionals $S_\Sigma=\int_\Sigma\cL$ for {\em any} $d$-dimensional manifold $\Sigma$ in the multi-time. We derive the main building blocks of the multi-time Euler-Lagrange equations for a discrete pluri-Lagrangian problem with $d=2$, the so called corner equations, and discuss the notion of consistency of the system of corner equations. We analyze the system of corner equations for a special class of three-point 2-forms, corresponding to integrable quad-equations of the ABS list. This allows us to close a conceptual gap of the work by Lobb and Nijhoff by showing that the corresponding 2-forms are closed not only on solutions of (non-variational) quad-equations, but also on general solutions of the corresponding corner equations. We also find an example of a pluri-Lagrangian system not coming from a multidimensionally consistent system of quad-equations.
\end{abstract}

\newpage

\section{Introduction}
%%%%%%%%%%%%%%%%%%%%%%%%%%%%%%%
%%%%%%%%%%%%%%%%%%%%%%%%%%%%%%%

In the last decade, a new understanding of integrability of discrete systems as their multi-dimensional consistency has been a major breakthrough \cite{BS1}, \cite{N}. This led to classification of discrete 2-dimensional integrable systems (ABS list)  \cite{ABS}, which turned out to be rather influential. According to the concept of multi-dimensional consistency, integrable two-dimensional systems can be imposed in a consistent way on all two-dimensional sublattices of a lattice $\bbZ^m$ of arbitrary dimension. This means that the resulting multi-dimensional system possesses solutions whose restrictions to any two-dimensional sublattice are generic solutions of the corresponding two-dimensional system. To put this idea differently, one can impose the two-dimensional equations on any quad-surface in $\bbZ^m$ (i.e., a surface composed of elementary squares), and transfer solutions from one such surface to another one, if they are related by a sequence of local moves, each one involving one three-dimensional cube, like the moves shown of Fig. \ref{Fig: local moves}.

\begin{figure}[htbp]
\begin{center}
\includegraphics[width=0.5\textwidth]{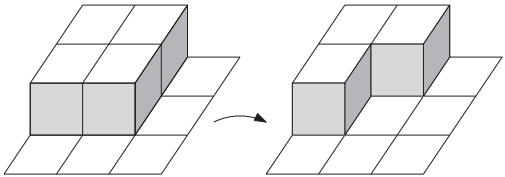}
\caption{Local move of a quad-surface involving one three-dimensional cube}
\label{Fig: local moves}
\end{center}
\end{figure}

A further fundamental conceptual development was initiated by Lobb and Nijhoff \cite{LN1} and deals with variational (Lagrangian) formulation of discrete multi-dimensionally consistent systems. Solutions of any ABS equation on any quad surface $\Sigma$ in $\bbZ^m$ are critical points of a certain action functional $S_\Sigma=\int_\Sigma\cL$ obtained by integration of a suitable discrete Lagrangian 2-form $\cL$. Lobb and Nijhoff observed that the value of the action functional remains invariant under local changes of the underlying quad-surface, and suggested to consider this as a defining feature of integrability. Their results, found on the case-by-case basis for some equations of the ABS list, have been extended to the whole list and were given a more conceptual proof in \cite{BS2} (generalized for asymmetric systems of quad-equations in \cite{BollSuris2}). Subsequently, this research was generalized in various directions: for multi-field two-dimensional systems \cite{LN2}, for dKP, the fundamental three-dimensional discrete integrable system \cite{LNQ}, and for the discrete time Calogero-Moser system, an important one-dimensional integrable system \cite{YLN}. Based on the latter example, a general theory of multi-time one-dimensional Lagrangian systems has been developed by one of us \cite{S}. It was demonstrated that, both in the continuous and in the discrete time, the property that the solutions deliver critical points for actions $S_\Gamma=\int_\Gamma \cL$ along arbitrary curves $\Gamma$ in the multi-time, is characteristic for commutativity of Hamiltonian flows in the continuous situation and of symplectic maps in the discrete situation. This property yields that the exterior derivative of the multi-time Lagrangian 1-form is constant, $d\cL={\rm const}$. Vanishing of this constant (i.e., closedness of the Lagrangian 1-form) was shown to be related to integrability in the following, more strict, sense:
\begin{itemize}
\item[--] in the continuous time situation, $d\cL=0$ is equivalent for the Hamiltonians of the commuting flows to be in involution,

\item[--] in the discrete time situation, for one-parameter families of commuting symplectic maps, $d\cL=0$ is equivalent to the {\em spectrality property}, introduced by Kuznetsov and Sklyanin \cite{KS}, which says that the derivative of the Lagrangian with respect to the parameter of the family is a generating function of common integrals of motion for the whole family.
\end{itemize}
An application of this general theory to integrable Toda-type systems and their B\"acklund transformations was given in our recent paper \cite{BoMaSu}.
\smallskip

Thus, the original idea of Lobb and Nijhoff can be summarized as follows: solutions of integrable systems deliver critical points simultaneously for actions along all possible manifolds of the corresponding dimension in multi-time; the Lagrangian form is closed on solutions. This idea is, doubtless, rather inventive (not to say exotic) in the framework of the classical calculus of variations. However,  it has significant precursors. These are:
\begin{itemize}
\item Theory of pluriharmonic functions and, more generally, of pluriharmonic maps \cite{R, OV, BFPP}. By definition, a pluriharmonic function of several complex variables $f:\bbC^m\to\bbR$ minimizes the Dirichlet functional $E_\Gamma=\int_\Gamma |(f\circ \Gamma)_z|^2dz\wedge d\bar z$ along any holomorphic curve in its domain $\Gamma:\bbC\to\bbC^m$. Differential equations governing pluriharmonic functions (and maps) are heavily overdetermined. Therefore it is not surprising that they belong to the theory of integrable systems.
\item Baxter's Z-invariance of solvable models of statistical mechanics  \cite{Bax1, Bax2}. This concept is based on invariance of the partition function of solvable models under elementary local transformations of the underlying planar graph. It is well known (see, e.g., \cite{BoMeSu}) that one can associate the planar graphs underling these models with quad-surfaces in $\bbZ^m$. On the other hand, the classical mechanical analogue of the partition function is the action functional. This makes the relation of Z-invariance to the concept of closedness of the Lagrangian 2-form rather natural, at least at the heuristic level. Moreover, this relation has been made mathematically precise for a number of models, through the quasiclassical limit, in the work of Bazhanov, Mangazeev, and Sergeev \cite{BMS1, BMS2}.
\item The classical notion of variational symmetries,  going back to the seminal work of E.~Noether \cite{Noether}, turns out to be directly related to the idea of the closedness of the Lagrangian form in the multi-time. We plan to elucidate this in a forthcoming publication.
\end{itemize}

Especially the relation with the pluriharmonic functions motivates us to introduce a novel term to describe the situation we are interested in, namely: given a $d$-form $\cL$ in the $m$-dimensional space ($d<m$), depending on a sought-after function $u$ of $m$ variables, one looks for functions $u$ which deliver critical points to actions $S_\Sigma=\int_\Sigma \cL$ corresponding to any $d$-dimensional manifold $\Sigma$. We call this a  {\em pluri-Lagrangian problem} and claim that integrability of variational systems should be understood as the existence of the pluri-Lagrangian structure. We find this term much more concrete and suggestive than the neutral one ``Lagrangian multiform structure'' used in the pioneering publications by Lobb and Nijhoff with co-authors. We envisage that this notion will play a very important role in the future development of the theory of integrable systems.

However, apart from giving a precise definition and a detailed discussion of discrete pluri-Lagrangian systems, which is done in Section \ref{sect: pluri}, the concrete goals of the present paper are more modest. We provide the reader with a more concrete motivation for the present study in Section \ref{sect: motiv}. Then, in Section \ref{sect: pluri quad} we consider pluri-Lagrangian systems which serve as a truly variational generalization of integrable quad-equations. Finally, in Section \ref{sect: pluri non-quad} we show that there exist discrete pluri-Lagrangian systems which cannot be derived from quad-equations.

\section{Motivation}
\label{sect: motiv}

The motivation for the present study comes from the the desire to answer the following two long-standing questions.
\smallskip

{\em Problem 1. Do discrete Laplace-type systems type exhibit some multi-dimensionally consistency?}
\noindent

Recall that an important instance of integrable Laplace-type systems are discrete relativistic Toda-type systems
\[
g(\widetilde{x}_k-x_k)-g(x_k-\undertilde{x}_k)=f(x_{k+1}-x_k)-f(x_k-x_{k-1})+
h(\undertilde{x}_{k+1}-x_k)-h(x_k-\widetilde{x}_{k-1}).
\]
Here, the tilde denotes the shift in the discrete time, which is the second coordinate direction on the lattice $\bbZ^2$, the first one being represented by the index $k$. This equation is illustrated in Fig.~\ref{fig: triangular lattice}. See \cite{A, AS, BollSuris1} for details. Systems of this kind can be considered as variational systems on the regular triangular lattice coming from the following action functional:
\begin{equation}\label{eq: action RTL}
    S=\sum_{\bbZ^2}\Big(L_1(x,x_1)+L_2(x,x_2)+L_3(x_1,x_2)\Big).
\end{equation}
Here we switched to the notation which will be mainly used throughout the paper: $x_1$, $x_2$ denote the shifts of the field $x$ in the 1st, resp. the 2nd coordinate direction of $\bbZ^2$.

%-----------------------------------------------------------------
\setcircle{10}
\begin{figure}[htbp]
\begin{center}
\setlength{\unitlength}{0.05em}
\begin{picture}(450,270)(-25,-30)
 % first horizontal line
 \path(-25,200)(325,200)
 \multiput(0,0)(100,0){4}{\Black(0,200)}
 \multiput(0,0)(100,0){4}{\Black(100,0)}
 \multiput(0,100)(100,0){5}{\Black(0,0)}
 % first row of vertical lines
 \path(0,100)(0,200)\path(100,100)(100,200)
 {\allinethickness{0.75mm} \path(200,100)(200,200)}
 \path (300,100)(300,200)
 % second row of vertical lines
 \path(100,0)(100,100)
 {\allinethickness{0.75mm} \path(200,0)(200,100)}
 \path(300,0)(300,100)\path(400,0)(400,100)
 % first row of diagonals
 \path(0,200)(100,100)
 {\allinethickness{0.75mm} \path(100,200)(200,100)}
 \path(200,200)(300,100)\path(300,200)(400,100)
 % second row of diagonals
 \path(0,100)(100,0)\path(100,100)(200,0)
 {\allinethickness{0.75mm} \path(200,100)(300,0)}
 \path(300,100)(400,0)
 % central horizontal line
 \path(-25,100)(100,100)
 {\allinethickness{0.75mm} \path(100,100)(300,100)}
 \path(300,100)(425,100)
 \path(-25,125)(0,100)
 \path(400,100)(400,125)
 \path(0,75)(0,100) \path(75,0)(425,0) \path(400,100)(425,75)
 \multiputlist(100,210)(100,0)[cb]{$\widetilde{x}_{k-1}$,$\widetilde{x}_k$}
 \multiputlist(110,110)(100,0)[lb]{$x_{k-1}$,$x_k$,$x_{k+1}$}
 \multiputlist(200,-10)(100,0)[ct]{$\undertilde{x}_k$,$\undertilde{x}_{k+1}$}
\end{picture}
\caption{Regular triangular lattice underlying discrete relativistic Toda-type systems; one-index notation is used, the shift in the second coordinate direction (discrete time) being denoted by tilde}
\label{fig: triangular lattice}
\end{center}
\end{figure}
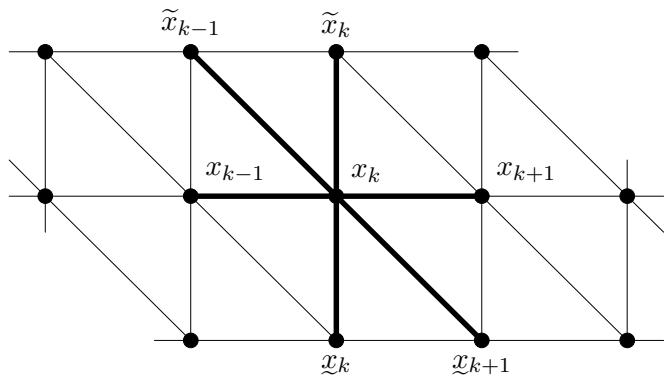

Despite a rather complete understanding of relations between integrable Laplace-type systems and integrable (multi-dimensionally consistent) systems of quad-equations (cf. \cite{BS1, AS, BollSuris1}), the question whether Laplace-type systems themselves possess any property of the kind of multi-dimensional consistency, resisted our understanding for quite a long time. Already the very first step in answering this question is far from obvious: what should be the combinatorial structures supporting a multi-dimensional extension of a discrete Laplace type system, for instance, of a system on the regular triangular lattice of Fig.~\ref{fig: triangular lattice}?
\medskip

{\em Problem 2. What are the natural conditions for the validity of the closure relation for action in multi-dimensions?}

\noindent
Recall that the main idea of the paper \cite{LN1} was to extend the functional
(\ref{eq: action RTL}) to quad-surfaces $\Sigma$ in the multidimensional square lattice according to the formula
$$
S_\Sigma=\sum_{\sigma_{ij}\in\Sigma}\cL(\sigma_{ij}),
$$
where the discrete Lagrangian 2-form $\cL$ assigns to each oriented elementary square $\sigma_{ij}=(n,n+e_i,n+e_i+e_j,n+e_j)$ the number $\cL(\sigma_{ij})=\cL_{ij}(x,x_i,x_j)$.
The main result, discovered in \cite{LN1} for a part of the ABS list and proven in \cite{BS2} for the whole ABS list and in \cite{BollSuris2} for asymmetric systems of quad-equations, states that for any consistent system of quad-equations on $\bbZ^m$, the Lagrangian 2-form $\cL$ satisfies the following closure relation {\em on solutions of quad-equations}:
\begin{equation*}\label{closure relation}
\Delta_i\mathcal{L}(\sigma_{jk})+\Delta_j\mathcal{L}(\sigma_{ki})+\Delta_k\mathcal{L}(\sigma_{ij})=0.
\end{equation*}
One feature of this important result is, however, not completely satisfying. The action $S_\Sigma$ provides only a {\em weak Lagrangian formulation} of quad-equations. This means that Euler-Lagrange equations for the action functional $S_\Sigma$ are mere {\em consequences} of quad equations. One cannot derive quad-equations from the Euler-Lagrange equations for $S_\Sigma$. In other words, solutions of quad-equations constitute a rather small subset of solutions of the Euler-Lagrange equations for the action functional $S_\Sigma$. Thus, it would be conceptually appealing to prove the closure relation {\em on solutions of the Euler-Lagrange equations} for an arbitrary surface $\Sigma$. For this, again the very first step should consist in a convenient description of the full set of the variational equations for all quad-surfaces $\Sigma$.
\smallskip

A solution for both problems mentioned here is provided in the next section and consists in the introduction of the concept of pluri-Lagrangian problems (Definition \ref{def:pluriLagr problem}) and in the finding a compact form of the full system of variational equations for such problems (Definition \ref{def:pluriLagr system}).

\section{Pluri-Lagrangian systems}
\label{sect: pluri}

\begin{definition}\label{def:pluriLagr problem} {\bf (Pluri-Lagrangian problem)}

\begin{itemize}
\item Let $\cL$ be a discrete 2-form, i.e., a real-valued function of oriented elementary squares
\[
\sigma_{ij}=\left(n,n+e_{i},n+e_{i}+e_{j},n+e_{j}\right)
\]
of $\bbZ^m$, such that $\cL\left(\sigma_{ij}\right)=-\cL\left(\sigma_{ji}\right)$. Usually, we will assume that it depends on some field assigned to the points of $\bbZ^m$, that is, on some $x:\bbZ^m\to \cX$ ($\cX$ being some vector space).
\item To an arbitrary oriented quad-surface $\Sigma$ in $\bbZ^m$ with a topology of a disk, there corresponds the {\em action functional}, which assigns to $x|_{V(\Sigma)}$, i.e., to the fields at the vertices of the surface $\Sigma$, the number
\begin{equation}\label{action on surface}
S_\Sigma=\sum_{\sigma_{ij}\in\Sigma}\cL(\sigma_{ij}).
\end{equation}
\item We say that the field $x:V(\Sigma)\to \cX$ is a critical point of $S_\Sigma$, if at any interior point $n\in V(\Sigma)$, we have
\begin{equation}\label{eq: dEL gen}
    \frac{\partial S_\Sigma}{\partial x(n)}=0.
\end{equation}
Equations (\ref{eq: dEL gen}) are called {\em discrete Euler-Lagrange equations} for the action $S_\Sigma$ (or just for the surface $\Sigma$, if it is clear which 2-form $\cL$ we are speaking about).
\item We say that the field $x:\bbZ^m\to\cX$ solves the {\em pluri-Lagrangian problem} for the Lagrangian 2-form $\cL$ if, {\em for any quad-surface $\Sigma$ in $\bbZ^m$}, the restriction $x|_{V(\Sigma)}$ is a critical point of the corresponding action $S_\Sigma$.
\end{itemize}
\end{definition}

Since the combinatorics of a quad-surface $\Sigma$ around its generic interior point $n$ can be rather complicated, it might seem that a general study of discrete Euler-Lagrange equations is a rather hopeless enterprize. We will show, however, that this is not the case, namely, that all Euler-Lagrange equations can be built from some elementary building blocks.
\begin{definition} {\bf (3D-corner)}
A 3D-corner is a quad-surface consisting of three elementary squares adjacent to a vertex of valence 3.
\end{definition}

This definition is illustrated in Fig.~\ref{fig:corners}.

%%%%%%%%%%%%%%%%%%%%%%%%%%%%%%%%%%%%%%%%%%%%%%%%%%%%%%%%%%%%%%%%%%
\begin{figure}[htbp]
   \centering
   \subfloat[3D-corner at $n$]{\label{fig:1a}
   \setlength{\unitlength}{0.04em}
\begin{picture}(220,280)(0,-40)
 \put(0,0){\circle*{10}}    \put(150,0){\circle*{10}}
 \put(0,150){\circle*{10}}   \put(150,150){\circle*{10}}
 \put(50,200){\circle*{10}}
% \put(200,200){\circle*{10}}
 \put(50,50){\circle*{10}}   \put(200,50){\circle*{10}}
 \path(150,0)(150,150)
 \path(150,0)(200,50)
 \path(0,0)(50,50)
 \path(50,50)(200,50)
 \path(150,150)(0,150)
 \path(200,50)(150,0)
 \path(0,145)(0,5)
 \path(0,0)(150,0)
 \path(0,150)(50,200)
 \path(50,200)(50,50)
 \put(0,-25){$n$}
 \put(7,130){$n+e_k$}
 \put(140,-25){$n+e_i$}
 %\put(162,140){$x_{ik}$}
 %\put(215,45){$x_{ij}$}
 %\put(210,210){$x_{ijk}$}
 \put(43,25){$n+e_j$}
 %\put(35,215){$x_{jk}$}
%
\end{picture}}\qquad
   \subfloat[3D-corner at $n+e_i$]{\label{fig:1b}
      \setlength{\unitlength}{0.04em}
\begin{picture}(220,280)(0,-40)
 \put(0,0){\circle*{10}}    \put(150,0){\circle*{10}}
 \put(0,150){\circle*{10}}   \put(150,150){\circle*{10}}
 %\put(50,200){\circle*{10}}
 \put(200,200){\circle*{10}}
 \put(50,50){\circle*{10}}   \put(200,50){\circle*{10}}
 \path(150,0)(150,150)
 \path(150,0)(200,50)
 \path(0,0)(50,50)
 \path(50,50)(200,50)
 \path(150,150)(0,150)
 \path(200,50)(150,0)
 \path(0,145)(0,5)
 \path(0,0)(150,0)
 \path(150,150)(200,200)
 \path(200,200)(200,50)
 \put(0,-25){$n$}
 %\put(-35,145){$x_k$}
 \put(140,-25){$n+e_i$}
 \put(160,140){$n+e_i+e_k$}
 \put(160,60){$n+e_i+e_j$}
 %\put(210,210){$x_{ijk}$}
 %\put(40,25){$x_j$}
 %\put(35,213){$x_{jk}$}
%
\end{picture}}\qquad
   \subfloat[3D-corner at $n+e_i+e_j$]{\label{fig:1c}
      \setlength{\unitlength}{0.04em}
\begin{picture}(220,280)(0,-40)
 \put(0,0){\circle*{10}}    \put(150,0){\circle*{10}}
% \put(0,150){\circle*{10}}
 \put(150,150){\circle*{10}}
 \put(50,200){\circle*{10}} \put(200,200){\circle*{10}}
 \put(50,50){\circle*{10}}   \put(200,50){\circle*{10}}
 \path(150,0)(200,50)
 \path(55,50)(200,50)
 \path(0,0)(50,50)
 \path(0,0)(150,0)
 \path(150,0)(150,150)
  \path(50,50)(50,200)
 \path(150,150)(200,200)
 \path(50,200)(200,200)
 \path(200,200)(200,50)
%
 %\put(-20,-15){$x$}
 %\put(-30,145){$x_k$}
 \put(140,-25){$n+e_i$}
 %\put(160,140){$x_{ik}$}
 \put(160,60){$n+e_i+e_j$}
 \put(100,215){$n+e_i+e_j+e_k$}
 \put(45,25){$n+e_j$}
 %\put(35,215){$x_{jk}$}
%
\end{picture}
   }
   \caption{Three 3D-corners of an elementary 3D cube}
   \label{fig:corners}
\end{figure}
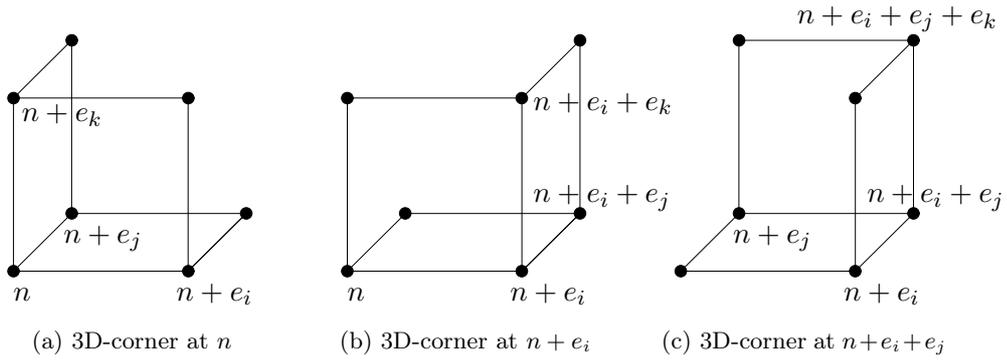
%%%%%%%%%%%%%%%%%%%%%%%%%%%%%%%%%%%%%%%%%%%%%%%%%%%%%%%%%%%%%%%%%%

\begin{lemma}
The flower of any interior vertex of an oriented quad-surface in $\bbZ^m$ can be represented as a sum of (oriented) 3D-corners in $\bbZ^{m+1}$.
\end{lemma}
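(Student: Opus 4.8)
The plan is to show that the flower $F$ of an interior vertex $n$, regarded as a $2$-chain of oriented elementary squares, is a telescoping sum of $3$D-corners, all of them centered at $n$, once we allow ourselves one extra lattice direction $e_{m+1}$. First, the combinatorics of the flower. Because $\Sigma$ is an oriented quad-surface with the topology of a disk and $n$ is interior, the elementary squares of $\Sigma$ incident to $n$ form an \emph{umbrella}: a cyclically ordered family $Q_1,\dots,Q_N$ (with $Q_{N+1}:=Q_1$) of elementary squares through $n$ such that consecutive squares $Q_\alpha,Q_{\alpha+1}$ share an edge issuing from $n$, all carrying orientations consistent with that of $\Sigma$. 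Denoting by $w_\alpha\in\{\pm e_1,\dots,\pm e_m\}$ the direction of the edge $Q_{\alpha-1}\cap Q_\alpha$ at $n$, and writing $[n;u,v]$ for the oriented square $(n,n+u,n+u+v,n+v)$, one gets a cyclic sequence $w_1,\dots,w_N$ in which $w_\alpha$ and $w_{\alpha+1}$ are $\pm$ of two \emph{distinct} coordinate directions (otherwise $Q_\alpha$ degenerates) and, with a suitable consistent labeling, $Q_\alpha=[n;w_\alpha,w_{\alpha+1}]$, so that $F=\sum_{\alpha=1}^{N}[n;w_\alpha,w_{\alpha+1}]$. Throughout I would use the elementary identity $[n;u,v]=-[n;v,u]$ (the same four points in the opposite cyclic order).

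Next, introduce a new direction $e_{m+1}$ and, for each $\alpha$, let $C_\alpha$ be the $3$D-corner in $\bbZ^{m+1}$ centered at $n$ with edge directions $w_\alpha$, $w_{\alpha+1}$, $e_{m+1}$ --- that is, the quad-surface made of the three squares of the elementary cube $(n;w_\alpha,w_{\alpha+1},e_{m+1})$ adjacent to $n$. With the orientation of Fig.~\ref{fig:corners} (which one checks is exactly the one turning $C_\alpha$ into a consistently oriented disk) these three squares add up cyclically,
\[
C_\alpha=[n;w_\alpha,w_{\alpha+1}]+[n;w_{\alpha+1},e_{m+1}]+[n;e_{m+1},w_\alpha].
\]
Each $C_\alpha$ is a genuine $3$D-corner precisely because $e_{m+1}$ is a coordinate direction distinct from those of both $w_\alpha$ and $w_{\alpha+1}$; this is the point where one really has to leave $\bbZ^m$, since a single direction among $e_1,\dots,e_m$ with that property for all $\alpha$ at once need not exist (already for $m=2$ a flat flower uses up both directions of $\bbZ^2$, where there is no $3$D-corner at all; had Fig.~\ref{fig:corners} used the opposite orientation convention one would simply take the oppositely oriented corners).

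Finally, sum over $\alpha=1,\dots,N$. The first terms reproduce $\sum_{\alpha}[n;w_\alpha,w_{\alpha+1}]=F$. For the rest, using $[n;e_{m+1},w_\alpha]=-[n;w_\alpha,e_{m+1}]$ and reindexing $\alpha\mapsto\alpha+1$ modulo $N$ in the middle family,
\[
\sum_{\alpha=1}^{N}[n;w_{\alpha+1},e_{m+1}]+\sum_{\alpha=1}^{N}[n;e_{m+1},w_\alpha]=\sum_{\beta=1}^{N}[n;w_\beta,e_{m+1}]-\sum_{\alpha=1}^{N}[n;w_\alpha,e_{m+1}]=0 .
\]
Hence $\sum_{\alpha=1}^{N}C_\alpha=F$, which is the assertion. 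As a by-product one gets the statement one is really after: since $\cL$ is local, the discrete Euler--Lagrange equation of $\Sigma$ at $n$ depends only on the flower, and therefore equals the sum of the $N$ corner equations of $C_1,\dots,C_N$ at $n$.

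The only genuinely delicate step is the first one: establishing that for an \emph{interior} vertex of a \emph{disk-type} quad-surface the incident squares really close up into a single cyclically ordered umbrella with mutually consistent orientations --- this is precisely where the topological hypotheses are used --- and then fixing all sign/orientation conventions so that $F$ and the $C_\alpha$ match. The telescoping itself is immediate; note in particular that it is insensitive to whether some direction $w_\alpha$ recurs in the cycle (a ``wrapped'' umbrella), as it only uses that the same multiset $\{[n;w_\alpha,e_{m+1}]\}_{\alpha=1}^{N}$ is both added and subtracted.
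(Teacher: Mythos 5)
Your proposal is correct and follows essentially the same route as the paper's proof: over each petal of the flower one erects the 3D-corner spanned by its two edges at $n$ and the extra edge $(n,n+e_{m+1})$, with orientation induced by the petal, and the ``vertical'' squares shared by consecutive corners cancel in pairs. Your version merely makes the cyclic ordering of the petals and the sign/orientation bookkeeping explicit via the telescoping sum, which the paper leaves to the reader and to Fig.~\ref{fig:flower}.
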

\begin{proof}
Consider the flower of an interior vertex $n$ of $\Sigma$. Over each elementary square (petal) of the flower, we can build a 3D-corner spanned by the two edges of this petal adjacent to $n$ and the edge $(n,n+e_{m+1})$ of an additional coordinate direction. The orientation of this 3D-corner is induced by the orientation of the corresponding elementary square of $\Sigma$. Then, obviously, the ``vertical'' squares coming from two successive petals of the flower carry opposite orientations, so that all ``vertical'' squares cancel away from the sum of the oriented 3D-corners. The construction is illustrated by the case of a four-petal flower in Fig.~\ref{fig:flower}.
\end{proof}

%%%%%%%%%%%%%%%%%%%%%%%%%%%%%%%%%%%%%%%%%%%%%%%%%%%%%%%%%%%%%%%%%%
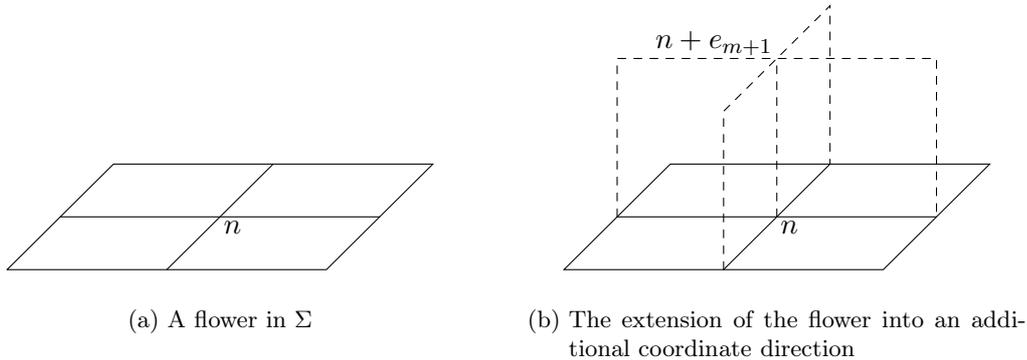
\begin{figure}[htbp]
   \centering
   \subfloat[A flower in $\Sigma$]{\label{fig:flower1}
   \begin{tikzpicture}[scale=0.7,inner sep=0.5]
      \useasboundingbox (-0.5,-0.5) rectangle (8.5,2.5);
      \draw (0,0) to (3,0) to (6,0) to (7,1) to (8,2) to (5,2) to (2,2) to (1,1) to (0,0);
      \draw (1,1) to (4,1) to (7,1);
      \draw (3,0) to (4,1) to (5,2);
      \node (n) at (4,1) [label=-45:$n$]{};
   \end{tikzpicture}
   }\qquad
   \subfloat[The extension of the flower into an additional coordinate direction]{\label{fig:flower2}
   \begin{tikzpicture}[scale=0.7,inner sep=0.5]
      \useasboundingbox (-0.5,-0.5) rectangle (8.5,5.5);
      \draw (0,0) to (3,0) to (6,0) to (7,1) to (8,2) to (5,2) to (2,2) to (1,1) to (0,0);
      \draw (1,1) to (4,1) to (7,1);
      \draw (3,0) to (4,1) to (5,2);
      \draw [dashed] (1,1) to (1,4) to (4,4) to (7,4) to (7,1);
      \draw [dashed] (3,0) to (3,3) to (4,4) to (5,5) to (5,2);
      \draw [dashed] (4,1) to (4,4);
      \node (n) at (4,1) [label=-45:$n$]{};
      \node (n) at (4,4) [label=135:$n+e_{m+1}$]{};
   \end{tikzpicture}
   }
   \caption{A planar four-petal flower and its decomposition into a sum of four 3D-corners}
   \label{fig:flower}
\end{figure}
%%%%%%%%%%%%%%%%%%%%%%%%%%%%%%%%%%%%%%%%%%%%%%%%%%%%%%%%%%%%%%%%%%

As a consequence, the action for any flower can be represented as a sum of actions for several 3D-corners. Thus, the Euler-Lagrange equation for any interior vertex $n$ of $\Sigma$ can be decomposed into a sum of several Euler-Lagrange equations for 3D-corners. This justifies the following fundamental definition.

\begin{definition}\label{def:pluriLagr system} {\bf (System of corner equations)} The {\em system of corner equations} for a given discrete 2-form $\cL$ consists of discrete Euler-Lagrange equations for all possible 3D-corners in $\bbZ^m$. If the action for the surface of an oriented elementary cube $\sigma_{ijk}$ of the coordinate directions $i,j,k$ (which can be identified with the discrete exterior derivative $d\cL$ evaluated at $\sigma_{ijk}$) is denoted by
\begin{equation}\label{eq: Sijk}
S^{ijk}=d\cL(\sigma_{ijk})=\Delta_k\cL(\sigma_{ij})+\Delta_i\cL(\sigma_{jk})+\Delta_j\cL(\sigma_{ki}),
\end{equation}
then the system of corner equations consists of the eight equations
\begin{equation}\label{eq: corner eqs}
\begin{array}{llll}
\dfrac{\partial S^{ijk}}{\partial x}=0, & \dfrac{\partial S^{ijk}}{\partial x_i}=0, & \dfrac{\partial S^{ijk}}{\partial x_j}=0, & \dfrac{\partial S^{ijk}}{\partial x_k}=0, \\
\\
\dfrac{\partial S^{ijk}}{\partial x_{ij}}=0, & \dfrac{\partial S^{ijk}}{\partial x_{jk}}=0, & \dfrac{\partial S^{ijk}}{\partial x_{ik}}=0, & \dfrac{\partial S^{ijk}}{\partial x_{ijk}}=0
\end{array}
\end{equation}
for each triple $i,j,k$. Symbolically, this can be put as $\delta(d\cL)=0$, where $\delta$ stands for the ``vertical'' differential (differential with respect to the dependent field variables $x$).
\end{definition}
Thus, the system of corner equations encompasses all possible discrete Euler-Lagrange equations for all possible quad-surfaces $\Sigma$. In other words, solutions of a pluri-Lagrangian problem as introduced in Definition \ref{def:pluriLagr problem} are precisely solutions of the corresponding system of corner equations.
\medskip

{\bf Remark.} We formulated the system of corner equations for a generic 2-form $\cL$. In particular cases the quantity $S^{ijk}$ could be independent on some of the fields at the corners of the cube. Then the system of corner equations (\ref{eq: corner eqs}) could contain less equations. An example of such a situation will be studied in detail in the next section.
\medskip

Of course, in order that the above definition be meaningful, the system of corner equations has to be {\em consistent}. Indeed, the system of corner equations (\ref{eq: corner eqs}) for one elementary cube is heavily overdetermined. It consists of eight equations, each one connecting seven fields out of eight. Any six fields can serve as independent data, then one can use two of the corner equations to compute the remaining two fields, and the remaining six corner equations have to be satisfied identically. This justifies the following definition.

\begin{definition}\label{def: corner eqs consist} {\bf (Consistency of corner equations)} System (\ref{eq: corner eqs}) is called {\em consistent}, if it has the minimal possible rank 2, i.e., if exactly two of these equations are independent.
\end{definition}

The main feature of our definition is that the ``almost closedness'' of the 2-form $\cL$ on solutions of the system of corner equations is, so to say, built-in from the outset. One should compare the proof of the following theorem with similar proofs in \cite{BS2, S}.

\begin{theorem}\label{Th: almost closed}
For any triple of the coordinate directions $i,j,k$, the action $S^{ijk}$ over an elementary cube of these coordinate directions is constant on solutions of the system of corner equations (\ref{eq: corner eqs}):
\[
S^{ijk}(x,\ldots,x_{ijk})=c^{ijk}={\rm const}  \pmod{\partial S^{ijk}/\partial x=0,\ \ldots,\
\partial S^{ijk}/\partial x_{ijk}=0}.
\]
\end{theorem}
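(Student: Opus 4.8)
The plan is to turn the statement into an almost immediate consequence of the way the corner equations (\ref{eq: corner eqs}) were set up.

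\textbf{Step 1: the core observation.} By formula (\ref{eq: Sijk}) the number $S^{ijk}=d\cL(\sigma_{ijk})$ depends on the field $x$ only through its values at the vertices of the elementary cube $\sigma_{ijk}$, namely $x,x_i,x_j,x_k,x_{ij},x_{jk},x_{ik},x_{ijk}$: indeed each of $\Delta_k\cL(\sigma_{ij})$, $\Delta_i\cL(\sigma_{jk})$, $\Delta_j\cL(\sigma_{ki})$ is a difference of two values of $\cL$ evaluated on squares of that cube. Hence the vertical differential of $S^{ijk}$ is
\[
\delta S^{ijk}=\frac{\partial S^{ijk}}{\partial x}\,\delta x+\frac{\partial S^{ijk}}{\partial x_i}\,\delta x_i+\cdots+\frac{\partial S^{ijk}}{\partial x_{ijk}}\,\delta x_{ijk},
\]
and each of these eight coefficients vanishes precisely by the corner equations (\ref{eq: corner eqs}) for the cube $\sigma_{ijk}$. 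Therefore $\delta S^{ijk}=0$ modulo the corner equations, which is exactly the displayed claim: $S^{ijk}$ reduces to a constant $c^{ijk}$ once (\ref{eq: corner eqs}) is imposed. Geometrically: by consistency (Definition \ref{def: corner eqs consist}) the common solution set of (\ref{eq: corner eqs}) is a (smooth, codimension-two) variety along which $S^{ijk}$ has identically vanishing differential, hence is locally constant on it.

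\textbf{Step 2: fine print.} The one routine check is that (\ref{eq: corner eqs}) really lists \emph{all} the field derivatives of $S^{ijk}$ — that is, that $S^{ijk}$ has no dependence on fields outside $\sigma_{ijk}$; this is transparent from (\ref{eq: Sijk}). In the degenerate situations alluded to in the Remark after Definition \ref{def:pluriLagr system} (e.g.\ a three-point 2-form, for which $S^{ijk}$ turns out not to involve $x_{ijk}$), some of the eight equations in (\ref{eq: corner eqs}) are vacuous, but this only deletes a term from the display above and changes nothing.

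\textbf{Step 3: the same constant for every cube.} Strictly, what Step 1 proves is the per-cube statement written with the $\pmod{\cdots}$; reading ``constant on solutions of the system of corner equations'' as ``the value $c^{ijk}$ is independent of the position of the cube'' requires one more input. The quickest is autonomy: since $\cL$, hence the whole system of corner equations, is translation invariant, its solution set is shift invariant, and applying a shift to a global solution identifies the value of $S^{ijk}$ on $\sigma_{ijk}$ with its value on $\sigma_{ijk}+e_l$; so $c^{ijk}$ is the same for all cubes of directions $i,j,k$. A complementary, more conceptual route — the analogue of the computations in \cite{BS2, S} — is the identity $d(d\cL)=0$, which on a $4$-cube reads $\Delta_l S^{ijk}=\Delta_i S^{jkl}-\Delta_j S^{ikl}+\Delta_k S^{ijl}$; the differential of each term on the right is a combination of partials of the type $\partial S^{jkl}/\partial x_\alpha$ for the two 3-cubes $\sigma_{jkl}$ and $\sigma_{jkl}+e_i$, hence vanishes on solutions, which shows that $\Delta_l S^{ijk}$ is constant on solutions and that the constants attached to neighbouring cubes are mutually compatible. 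I expect essentially no obstacle in Step 1 — that is precisely the point of the construction — and only a little care in Step 3, where one must be precise about which corner equations (on which cubes) are invoked and about what ``constant modulo the corner equations'' means at the algebraic level.
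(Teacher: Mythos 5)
Your Step 1 is exactly the paper's proof: the paper disposes of the theorem in one line by noting that on the connected six-dimensional solution manifold the gradient of $S^{ijk}$, viewed as a function of the eight corner fields, vanishes by virtue of (\ref{eq: corner eqs}) (connectedness being what upgrades your ``locally constant'' to constant). Your Steps 2--3 are harmless extra commentary; the theorem as stated is a per-cube claim, so the cross-cube discussion is not needed, and the paper does not include it.
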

\begin{proof} On the connected six-dimensional manifold of solutions, the gradient of $S^{ijk}$ considered as a function of eight variables, vanishes by virtue of (\ref{eq: corner eqs}).
\end{proof}
The most interesting case is, of course, when all $c^{ijk}=0$. Then one can say that $d\cL=0$, i.e., the discrete 2-form $\cL$ is closed on solutions of the system of corner equations.

\section{System of corner equations for a three-point 2-form corresponding to integrable quad-equations}
\label{sect: pluri quad}

The main class of examples we consider in this paper is characterized by the following ansatz for the discrete 2-form $\cL$:
\begin{equation}\label{lag LN ij}
\cL(\sigma_{ij})=\mathcal{L}(x,x_i,x_j;\alpha_i,\alpha_j)=L(x,x_i;\alpha_i)-
L(x,x_j;\alpha_j)-\Lambda(x_i,x_j;\alpha_i,\alpha_j)
\end{equation}
for each elementary square $\sigma_{ij}=(n,n+e_i,n+e_i+e_j,n+e_j)$.
Of course, the function $\Lambda$ should satisfy $\Lambda(x,y;\alpha,\beta)=-\Lambda(y,x;\beta,\alpha)$ to ensure the skew-symmetry property $\cL(\sigma_{ji})=-\cL(\sigma_{ij})$. Note that this is the general structure of the discrete 2-form used in \cite{LN1, BS2} to describe the Lagrangian structure of integrable quad-equations of the ABS list. With minor modifications, results of the present section can be extended to asymmetric systems of quad-equations studied in \cite{BollSuris2}. We stress again that quad-equations {\em are not variational}; rather, discrete Euler-Lagrange equations for the 2-forms given in \cite{LN1, BS2} are {\em consequences} of quad-equations. The main result of \cite{LN1, BS2} says that the discrete 2-forms $\cL$ are closed on solutions of the corresponding quad-equations.
We now turn to the intrinsically Lagrangian generalization of this statement.

First of all, we mention that for a three-point 2-form (\ref{lag LN ij}), the action $S^{ijk}$ over the surface of an elementary cube specializes to
\begin{eqnarray}\label{eq: dL}
S^{ijk} & = & L(x_i,x_{ij};\alpha_j)+L(x_j,x_{jk};\alpha_k)+L(x_k,x_{ik};\alpha_i)\nonumber\\
       &   & -L(x_i,x_{ik};\alpha_k)-L(x_j,x_{ij};\alpha_i)-L(x_k,x_{jk};\alpha_j)\nonumber\\
       &   & -\Lambda(x_{ij},x_{ik};\alpha_j,\alpha_k)
             -\Lambda(x_{jk},x_{ij};\alpha_k,\alpha_i)
             -\Lambda(x_{ik},x_{jk};\alpha_i,\alpha_j)\nonumber\\
       &   & +\Lambda(x_{j},x_{k};\alpha_j,\alpha_k)+
             \Lambda(x_{k},x_{i};\alpha_k,\alpha_i)+
             \Lambda(x_{i},x_{j};\alpha_i,\alpha_j).\qquad
\end{eqnarray}
Thus, $S^{ijk}$ depends on neither $x$ nor $x_{ijk}$, so that its  domain of definition is better visualized as an octahedron shown in Fig.~\ref{octahedron}.

%-----------------------------------------------------------------
\begin{figure}[htbp]
\begin{center}
\setlength{\unitlength}{0.05em}
\begin{picture}(200,250)(0,-10)
 \put(0,0){\circle{10}}    \put(150,0){\circle*{10}}
 \put(0,150){\circle*{10}}   \put(150,150){\circle*{10}}
 \put(50,200){\circle*{10}} \put(200,200){\circle{10}}
 \put(50,50){\circle*{10}}   \put(200,50){\circle*{10}}
 {\allinethickness{0.65mm}\path(150,5)(150,150)(5,150)
 \path(3.53,153.53)(50,200)(150,150)(200,50)(153.53,3.53)
 \path(3.53,146.47)(146.47,3.53)
 \path(1,145.5)(46.5,54)
 \path(145.5,1)(54,46.5)
 \dashline[100]{15}(50,200)(200,50)
 \dashline[80]{15}(50,55)(50,200)
 \dashline[80]{15}(55,50)(200,50)}
 {\thinlines\path(3.53,3.53)(46.47,46.47)
 \path(0,145)(0,5)\path(5,0)(145,0)
 \path(150,150)(196.47,196.47)
 \path(50,200)(195,200)
 \path(200,195)(200,50)}
 %\dashline{3}(150,0)(0,150)
 %\dashline{3}(150,0)(70,50)
 %\dashline{3}(70,50)(0,150)
 %\dashline{3}(150,0)(220,200)
 %\dashline{3}(0,150)(220,200)
 %\dashline{3}(70,50)(220,200)
 \put(-20,-15){$x$} \put(-30,145){$x_k$}
 \put(160,-15){$x_i$} \put(162,140){$x_{ik}$}
 \put(215,45){$x_{ij}$} \put(210,210){$x_{ijk}$}
 \put(40,25){$x_j$}  \put(35,213){$x_{jk}$}
\end{picture}
\caption{Octahedron}\label{octahedron}
\end{center}
\end{figure}
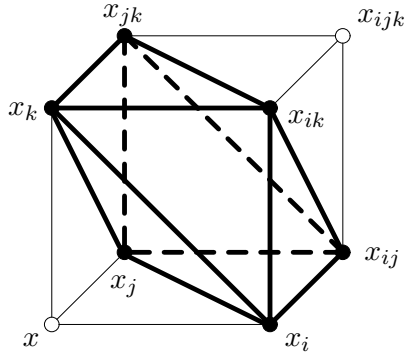
%-----------------------------------------------------------------

The system of corner equations for $\cL$ consists of six equations for each elementary cube: for the vertices $x$ and $x_{ijk}$ there are {\em no equations}, while for the vertices $x_i$ and $x_{ij}$ we have the following {\em four-leg, five-point equations}:
\begin{equation}\label{eq: corner i}\tag{$E_i$}
    \psi(x_i,x_{ij};\alpha_j)-\psi(x_i,x_{ik};\alpha_k)-
    \phi(x_{i},x_{k};\alpha_i,\alpha_k)+\phi(x_{i},x_{j};\alpha_i,\alpha_j)=0,
\end{equation}
and
\begin{equation}\label{eq: corner ij}\tag{$E_{ij}$}
    \psi(x_{ij},x_{i};\alpha_j)-\psi(x_{ij},x_{j};\alpha_i)-
    \phi(x_{ij},x_{ik};\alpha_j,\alpha_k)+\phi(x_{ij},x_{jk};\alpha_i,\alpha_k)=0.
\end{equation}
Here, we introduced the notation
\begin{equation*}\label{eq: psi}
\psi(x,y;\alpha)=\frac{\partial L(x,y;\alpha)}{\partial x},\quad
\phi(x,y;\alpha,\beta)=\frac{\partial \Lambda(x,y;\alpha,\beta)}{\partial x}.
\end{equation*}
These equations are illustrated in Fig.~\ref{fig: corner eqs}.

%%%%%%%%%%%%%%%%%%%%%%%%%%%%%%%%%%%%%%%%%%%%%%%%%%%%%%%%%%%%%%%%%%
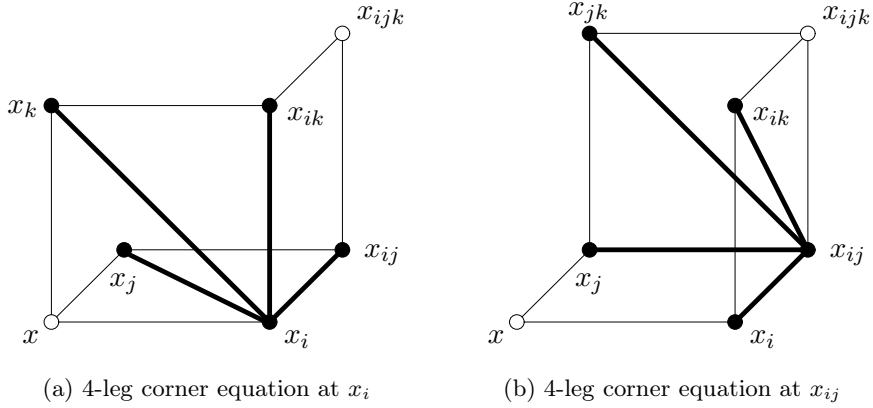
\begin{figure}[htbp]
   \centering
   \subfloat[4-leg corner equation at $x_{i}$]{\label{fig:2c}
   \setlength{\unitlength}{0.05em}
\begin{picture}(220,280)(0,-30)
 \put(0,0){\circle{10}}    \put(150,0){\circle*{10}}
 \put(0,150){\circle*{10}}   \put(150,150){\circle*{10}}
 %\put(50,200){\circle*{10}}
 \put(200,200){\circle{10}}
 \put(50,50){\circle*{10}}   \put(200,50){\circle*{10}}
 {\allinethickness{0.65mm}\path(150,5)(150,150)
 \path(3.53,146.47)(146.47,3.53)
 \path(145.5,1)(54,46.5)
 \path(150,0)(200,50)}
 {\thinlines\path(3.53,3.53)(46.47,46.47)
 \path(50,50)(200,50)
 \path(150,150)(5,150)
 \path(200,50)(153.53,3.53)
 \path(0,145)(0,5)\path(5,0)(145,0)
 \path(150,150)(196.47,196.47)
 \path(200,195)(200,50)}
 \put(-20,-15){$x$} \put(-30,145){$x_k$}
 \put(160,-15){$x_i$} \put(162,140){$x_{ik}$}
 \put(215,45){$x_{ij}$} \put(210,210){$x_{ijk}$}
 \put(40,25){$x_j$}  % \put(35,213){$x_{jk}$}
\end{picture}
   }\qquad\qquad
   \subfloat[4-leg corner equation at $x_{ij}$]{\label{fig:2d}
   \setlength{\unitlength}{0.05em}
\begin{picture}(220,280)(0,-30)
 \put(0,0){\circle{10}}    \put(150,0){\circle*{10}}
% \put(0,150){\circle*{10}}
 \put(150,150){\circle*{10}}
 \put(50,200){\circle*{10}} \put(200,200){\circle{10}}
 \put(50,50){\circle*{10}}   \put(200,50){\circle*{10}}
 {\allinethickness{0.65mm}
 \path(150,150)(200,50)
 \path(150,0)(200,50)
 \path(50,200)(200,50)
 \path(55,50)(200,50)}
 {\thinlines\path(3.53,3.53)(46.47,46.47)
 \path(5,0)(150,0)
 \path(150,5)(150,150)
  \path(50,55)(50,200)
 \path(150,150)(196.47,196.47)
 \path(50,200)(195,200)
 \path(200,195)(200,50)}
 \put(-20,-15){$x$} % \put(-30,145){$x_k$}
 \put(160,-15){$x_i$} \put(162,140){$x_{ik}$}
 \put(215,45){$x_{ij}$} \put(210,210){$x_{ijk}$}
 \put(40,25){$x_j$}  \put(35,213){$x_{jk}$}
\end{picture}
   }
   \caption{Two sorts of corner equations in the case of a three-point 2-form}
   \label{fig: corner eqs}
\end{figure}
%%%%%%%%%%%%%%%%%%%%%%%%%%%%%%%%%%%%%%%%%%%%%%%%%%%%%%%%%%%%%%%%%%

The notion of {\em consistency} of the system of corner equations has to be modified in the present case as follows. One can take any four of the six fields as independent initial data. Then two of the corner equations can be used to determine the two remaining fields, and then the four remaining corner equations have to be satisfied identically. We see that formally Definition \ref{def: corner eqs consist} can be kept literally the same.
\medskip

In Fig.~\ref{fig: from corner eqs to 7-point}, we translate geometric considerations of Fig.~\ref{fig:flower} to the level of equations, showing how the sum of four corner equations for a three-point 2-form yields a planar equation of the relativistic Toda type (cf. also Fig.~\ref{fig: triangular lattice}).

%%%%%%%%%%%%%%%%%%%%%%%%%%%%%%%%%%%%%%%%%%%%%%%%%%%%%%%%%%%%%%%%%%
\begin{figure}[htbp]
   \centering

   \setlength{\unitlength}{0.037em}
\begin{picture}(1000,500)(0,-20)
%
% left bottom cube
%
 \put(0,0){\circle{10}}
 \put(150,0){\circle*{10}}
% \put(0,150){\circle*{10}}
 \put(150,150){\circle*{10}}
 \put(50,200){\circle*{10}}
 \put(200,200){\circle{10}}
 \put(50,50){\circle*{10}}
 \put(200,50){\circle*{10}} \put(200,25){$x$}
 {\allinethickness{0.65mm}
 \path(150,150)(200,50)
 \path(150,0)(200,50)
 \path(50,200)(200,50)
 \path(55,50)(200,50)}
 {\thinlines\path(3.53,3.53)(46.47,46.47)
 \path(5,0)(150,0)
 \path(150,5)(150,150)
  \path(50,55)(50,200)
 \path(150,150)(196.47,196.47)
 \path(50,200)(195,200)
 \path(200,195)(200,50)}
%
% \put(-20,-15){$x$} % \put(-30,145){$x_k$}
% \put(160,-15){$x_i$} \put(162,140){$x_{ik}$}
% \put(215,45){$x_{ij}$} \put(210,210){$x_{ijk}$}
% \put(40,25){$x_j$}  \put(35,213){$x_{jk}$}
%
% left top cube
%
 \put(100,250){\circle{10}}
 \put(250,250){\circle*{10}} \put(260,240){$x$}
 \put(100,400){\circle*{10}}
 \put(250,400){\circle*{10}}
 %\put(50,200){\circle*{10}}
 \put(300,450){\circle{10}}
 \put(150,300){\circle*{10}}
 \put(300,300){\circle*{10}}
 {\allinethickness{0.65mm}\path(250,255)(250,400)
 \path(103.53,396.47)(246.47,253.53)
 \path(245.5,251)(154,296.5)
 \path(250,250)(300,300)}
 {\thinlines\path(103.53,253.53)(146.47,296.47)
 \path(150,300)(300,300)
 \path(250,400)(105,400)
 \path(300,300)(253.53,253.53)
 \path(100,395)(100,255)\path(105,250)(245,250)
 \path(250,400)(296.47,446.47)
 \path(300,445)(300,300)}
%
% \put(-20,-15){$x$} \put(-30,145){$x_k$}
% \put(160,-15){$x_i$} \put(162,140){$x_{ik}$}
% \put(215,45){$x_{ij}$} \put(210,210){$x_{ijk}$}
% \put(40,25){$x_j$}  % \put(35,213){$x_{jk}$}
%
% right bottom cube
%
 \put(250,0){\circle{10}}
 \put(400,0){\circle*{10}}
 \put(250,150){\circle*{10}}
% \put(400,150){\circle*{10}}
 \put(300,200){\circle*{10}}
 \put(450,200){\circle{10}}
 \put(300,50){\circle*{10}}  \put(295,25){$x$}
 \put(450,50){\circle*{10}}
 {\allinethickness{0.65mm}
 \path(300,50)(400,0)
 \path(300,50)(450,50)
 \path(300,50)(300,200)
 \path(300,50)(250,150)}
 {\thinlines\path(253.53,3.53)(300,50)
 \path(255,0)(400,0)
 \path(250,5)(250,150)(300,200)(450,200)(450,50)(400,0)
 }
%
% right top cube
%
 \put(350,250){\circle{10}}  \put(325,240){$x$}
 \put(500,250){\circle*{10}}
 \put(350,400){\circle*{10}}
 \put(500,400){\circle*{10}}
 \put(400,450){\circle*{10}}
% \put(450,200){\circle{10}}
 \put(400,300){\circle*{10}}
 \put(550,300){\circle*{10}}
 {\thinlines\path(353.53,253.53)(400,300)
 \path(355,250)(500,250)
 \path(350,255)(350,400)(500,400)(500,250)(550,300)(400,300)(400,450)(350,400)
 }
\put(250,200){$\boldsymbol{+}$}
%
% 4x4 square
%
\put(550,150){\circle{10}}
\put(700,150){\circle*{10}}
\put(850,150){\circle*{10}}
\put(600,200){\circle*{10}}
\put(750,200){\circle*{10}} \put(745,175){$x$}
\put(900,200){\circle*{10}}
\put(650,250){\circle*{10}}
\put(800,250){\circle*{10}}
\put(950,250){\circle{10}}
 {\thinlines\path(553.53,153.53)(650,250)(945,250)
 \path(946.47,246.47)(850,150)(553,150)
 }
{\allinethickness{0.65mm}
 \path(600,200)(900,200)
 \path(700,150)(800,250)
 \path(650,250)(850,150)}
\put(545,200){$\boldsymbol{=}$}
\end{picture}
   \caption{Sum of four corner equations (to be matched at $x$) results in a planar seven-point equation of the relativistic Toda type}
   \label{fig: from corner eqs to 7-point}
\end{figure}
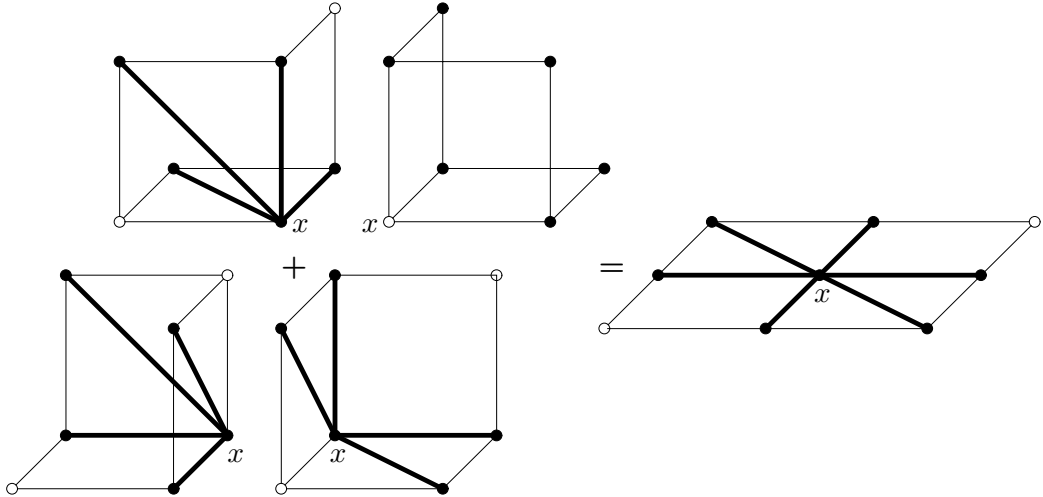
%%%%%%%%%%%%%%%%%%%%%%%%%%%%%%%%%%%%%%%%%%%%%%%%%%%%%%%%%%%%%%%%%%

Now we can take action functionals given in \cite{LN1, BS2} for consistent systems of quad-equations and consider the corresponding systems of corner equations. We remark that the corner equations within one cube do not know anything about the fields $x$ and $x_{ijk}$. In particular, quad-equations do not have to be satisfied (no equation of the type (\ref{eq: corner i}) or (\ref{eq: corner ij}) contains 4 vertices of an elementary square). Effectively, this means that in \cite{LN1, BS2} one considered these same systems, but only a very restricted class of solutions thereof, satisfying quad-equations on all squares. Here, we are interested in general solutions.

To phrase this on the more concrete level of equations, we observe that each of the corner equations (\ref{eq: corner i}), (\ref{eq: corner ij}) is a difference of the corresponding three-leg forms of two quad-equations. For instance, equation (\ref{eq: corner i}) is a difference of
\begin{eqnarray}
\psi(x_i,x_{ij};\alpha_j)+\phi(x_i,x_j;\alpha_i,\alpha_j)-\psi(x_i,x;\alpha_i)=0,
  \label{eq: 3leg ij}\\
\psi(x_i,x_{ik};\alpha_k)+\phi(x_i,x_k;\alpha_i,\alpha_k)-\psi(x_i,x;\alpha_i)=0,
  \label{eq: 3leg ik}
\end{eqnarray}
which are three-leg forms, centered at $x_i$, of the quad-equations on the elementary squares $(x,x_i,x_{ij},x_j)$, resp. $(x,x_i,x_{ik},x_k)$. Now we do not require that these three-leg equations hold, but only that their four-leg difference is satisfied.

\begin{theorem} \label{Th: pluri quad consist}
For the discrete 2-forms $\cL$ given in \cite{LN1, BS2} for multi-dimensionally consistent quad-equations of the ABS list, the corresponding systems of corner equations are consistent, as well. Moreover, the 2-form $\cL$ is closed on solutions of the corner equations.
\end{theorem}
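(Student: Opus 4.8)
The plan is to reduce both assertions to a single statement about one elementary cube, namely that the solutions of the six corner equations $(E_i),(E_j),(E_k),(E_{ij}),(E_{jk}),(E_{ki})$ coincide with the restrictions to the octahedron of Fig.~\ref{octahedron} of the solutions of the six quad-equations on the faces of the cube. One inclusion is immediate and already yields the closedness. If the six face quad-equations hold, then so do all their three-leg forms; since each corner equation is by construction a difference of two such three-leg forms --- for $(E_i)$ it is the difference of \eqref{eq: 3leg ij} and \eqref{eq: 3leg ik}, and analogously for the other five --- every corner equation holds. Thus any global solution of the quad-equations on $\bbZ^m$ restricts, on each cube, to a solution of the corner equations, and there $S^{ijk}=d\cL(\sigma_{ijk})=0$ by the closure relation established in \cite{LN1,BS2} precisely for the present $2$-forms $\cL$. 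On the other hand, Theorem~\ref{Th: almost closed} gives $S^{ijk}=c^{ijk}={\rm const}$ on the (connected) solution manifold of the corner equations, and that manifold contains the restriction of at least one such global solution; hence $c^{ijk}=0$, i.e. $\cL$ is closed on solutions of the corner equations.

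For consistency one needs the converse inclusion. Given a solution $(x_i,x_j,x_k,x_{ij},x_{jk},x_{ik})$ of the corner equations, I would reconstruct the missing vertex $x$ of the cube by solving the three-leg equation \eqref{eq: 3leg ij} (centered at $x_i$) for $x$; for every equation of the ABS list the relevant leg function $\psi(x_i,\,\cdot\,;\alpha_i)$ is generically invertible, so $x$ is thereby determined on a dense open set. Then \eqref{eq: 3leg ij} holds by construction, hence so does the quad-equation on the square $(x,x_i,x_{ij},x_j)$ (for ABS equations the three-leg form at any vertex is an equivalent form of the quad-equation) and, with it, all of its three-leg forms. Since $(E_i)$ is the difference of \eqref{eq: 3leg ij} and \eqref{eq: 3leg ik} and both $(E_i)$ and \eqref{eq: 3leg ij} hold, \eqref{eq: 3leg ik} holds, so the quad-equation on $(x,x_i,x_{ik},x_k)$ holds. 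Similarly, one of the two three-leg forms whose difference is $(E_j)$ is a three-leg form of the already-verified square $(x,x_i,x_{ij},x_j)$; hence $(E_j)$ forces the other, which is a three-leg form of $(x,x_j,x_{jk},x_k)$, so that quad-equation holds too. Thus the three face quad-equations adjacent to $x$ are all satisfied, and the three-dimensional consistency of the ABS equations then determines $x_{ijk}$, forces the remaining three faces to hold, and (by uniqueness of the propagation) makes the prescribed $x_{ij},x_{jk},x_{ik}$ coincide with the values it produces. Hence the corner-solution extends to a genuine cube quad-solution.

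Combining the two inclusions, the common zero set of the six corner equations equals, generically, the image of the restriction map from the manifold of cube quad-solutions; the latter is parametrized by $(x,x_i,x_j,x_k)$ --- equivalently by any four of the six octahedron fields --- hence is a connected, embedded $4$-dimensional manifold (which retroactively justifies the connectedness used in the first paragraph). Being a $4$-manifold inside the $6$-dimensional space of octahedron fields, its defining equations have differentials spanning at most $6-4=2$ dimensions at each of its points; and they span at least $2$, since $(E_i)$ does not involve $x_{jk}$ whereas $(E_{jk})$ does, with generically nonzero derivative, so the differentials of $(E_i)$ and $(E_{jk})$ are linearly independent. Therefore exactly two of the corner equations are independent: the system is consistent in the sense of Definition~\ref{def: corner eqs consist}.

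The conceptual skeleton above is short; the genuine work --- and the main obstacle --- lies in the two places where the ABS structure enters: (i) verifying, uniformly over the list, that each corner equation $(E_i)$ and $(E_{ij})$ is the difference of the appropriate pair of three-leg forms, with the correct legs and parameters, and that for ABS equations the three-leg form at any vertex is equivalent to the quad-equation; and (ii) establishing the generic invertibility of the leg functions $\psi,\phi$ needed to reconstruct $x$ and to close the argument. Both rest on the explicit three-leg structure of the ABS equations and of the associated functions $L,\Lambda$; for the $H$-type equations, where the biquadratics degenerate, a handful of cases must be inspected separately. Everything beyond that is sign and parameter bookkeeping.
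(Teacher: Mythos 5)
Your proposal is correct and follows essentially the same route as the paper's proof: reconstruct the ``alien'' vertex $x$ from a three-leg form, use the corner equations (as differences of three-leg forms) together with 3D consistency of the quad-equations to show that solutions of two corner equations extend to full cube solutions and hence satisfy all six, and then fix the constant $c^{ijk}$ from Theorem~\ref{Th: almost closed} by evaluating on a quad-equation solution via the closure relation of \cite{LN1, BS2}. The only differences are cosmetic: the paper determines the two unknown fields from $(E_i)$ and $(E_{ij})$ rather than $(E_i)$ and $(E_j)$, and it takes the three-leg structure of the ABS Lagrangians for granted from \cite{LN1, BS2} instead of listing it as remaining work.
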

\begin{proof}
Consider initial data at four out of six vertices of an elementary 3D cube with one or two indices. For the sake of concreteness, let them be the values $x_i$, $x_j$, $x_{ij}$, $x_{ik}$. Find the other two by using two of the corner equations. In our example these should be equations $(E_i)$ delivering $x_k$ and $(E_{ij})$ delivering $x_{jk}$. Now we {\em define} an auxiliary field $x$ by requiring that the quad-equation on the face $(x,x_i,x_{ij},x_j)$ be fulfilled. In other words, equation (\ref{eq: 3leg ij}) is used to define $x$. Note that this value $x$ is ``alien'' in the sense that it has nothing to do neither  with the solution of the system of corner equations (which does not contain the corner equation at the vertex $x$ corresponding to the 3D cube under consideration) nor with the solution of discrete Euler-Lagrange equations for any 2D surface, even if it contains the vertex $x$. This ``alien'' value of $x$ satisfies also the quad-equation on the face $(x,x_i,x_{ik},x_k)$ in its three-leg form (\ref{eq: 3leg ik}), as it follows from the comparison of the four-leg corner equation (\ref{eq: corner i}) with the three-leg equation (\ref{eq: 3leg ij}). Now, the ``alien'' value $x$ together with the values $x_i$, $x_j$, $x_k$, yields, due to the 3D consistency of the system of quad-equations, a solution of the latter on the 3D cube. We show that the restriction of this solution to the vertices with one or two indices satisfies {\em all} corner equations of our system. Indeed, from the three-leg forms of the quad-equations on the faces adjacent to $x$, centered at $x_j$, $x_k$, we see that the corner equations $(E_j)$, $(E_k)$ are satisfied, as well. Analogously, from the three-leg forms of the quad-equations on the faces adjacent to $x_{ijk}$, centered at the vertices with two indices, we see that all three corner equations at the vertices with two indices are also fulfilled.

As for the second claim (closedness of the form $\cL$), it follows from Theorem \ref{Th: almost closed} that the quantities $S^{ijk}$ are constant on solutions of the corner equations. The values of the constants $c^{ijk}$ can be determined on {\em any} solution. For solutions which satisfy the corresponding quad-equations, these constants are equal to zero by results of \cite{LN1, BS2}.
\end{proof}

We will show that consistency of some of the systems of corner equations of Theorem \ref{Th: pluri quad consist} has another spectacular manifestation. For this, we start with the following observation.

\begin{proposition}
Solutions of quad-equations of the types Q1, Q3$_{\delta=0}$, H1, H2, H3 on an elementary 3D cube satisfy {\em octahedron relations}, i.e., relations of the type $$Q(x_i,x_j,x_k,x_{ij},x_{jk},x_{ik})=0,$$ which do not involve $x$ and $x_{ijk}$.
\end{proposition}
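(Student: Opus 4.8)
The plan is to combine the multidimensional consistency of these quad-equations with their three-leg forms, building on the three-leg considerations used in the proof of Theorem~\ref{Th: pluri quad consist}. By 3D-consistency, a solution on an elementary cube is determined by its values at four vertices and satisfies the quad-equation $Q=0$ on each of the six faces; in particular the three faces $(x,x_i,x_{ij},x_j)$, $(x,x_j,x_{jk},x_k)$, $(x,x_k,x_{ik},x_i)$ meeting at $x$ hold simultaneously, and so do the three faces meeting at $x_{ijk}$. Each of the listed equations can be rewritten on a face so as to isolate the dependence on the ``polar'' vertex: on $(x,x_i,x_{ij},x_j)$ one obtains $x$ as an explicit (multi-M\"obius) expression in $x_i,x_j,x_{ij}$, and for the five equations in question this expression has a particularly simple shape. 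First I would record these shapes, e.g.\ for Q1$_{\delta=0}$ (the cross-ratio equation) the face equation is equivalent to
\[
\frac{x-x_i}{x-x_j}=\frac{\alpha_i(x_i-x_{ij})}{\alpha_j(x_j-x_{ij})},
\]
where the whole $x$-dependence sits in the left-hand ratio; for H1 the face equation is $x-x_{ij}=(\alpha_i-\alpha_j)/(x_i-x_j)$, where the $x$-dependence is the single term $x-x_{ij}$; for H2, H3 and Q3$_{\delta=0}$ one obtains analogous simple expressions.

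The second step is the elimination itself. For Q1$_{\delta=0}$ one multiplies the three face relations cyclically: the product of the left-hand sides is $1$, the parameters cancel on the right, and one is left with the octahedron relation
\[
\frac{(x_i-x_{ij})(x_j-x_{jk})(x_k-x_{ik})}{(x_j-x_{ij})(x_k-x_{jk})(x_i-x_{ik})}=1,
\]
which involves neither $x$ nor $x_{ijk}$. For H1, H2 one subtracts two of the face relations, and for H3, Q3$_{\delta=0}$ one takes a ratio of two of them, in each case cancelling the polar variable; clearing denominators gives a genuine polynomial relation among $x_i,x_j,x_k,x_{ij},x_{jk},x_{ik}$. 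I would then check that the relation obtained is non-trivial (its leading coefficients in the six variables do not vanish identically) and, if a fully symmetric form is wanted, combine it with the companion relation produced by the analogous construction at $x_{ijk}$.

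The step I expect to be the main obstacle is the one that actually selects these five equations from the rest of the ABS list: for a general quad-equation the expression for $x$ on a face does not have the special shape above, so that eliminating $x$ between two faces produces a biquadratic-type relation rather than a clean octahedron relation. It is the degenerate structure of the biquadratics of Q1, Q3$_{\delta=0}$, H1, H2, H3 that makes the simplification go through, and stating this uniformly is delicate; in practice I would not insist on a uniform argument but would simply run through the (short) list of explicit three-leg forms one equation at a time, each computation being elementary, with care required only for the orientation and parameter bookkeeping around the cube.
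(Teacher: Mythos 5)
Your proposal is correct and takes essentially the same route as the paper: the paper's proof is a case-by-case inspection in which each of the five quad-equations is written in a form whose $x$-dependence cancels when the three face equations adjacent to $x$ are combined (added for H1, H2, H3, multiplied for Q1, Q3$_{\delta=0}$), which immediately produces the symmetric six-point octahedron relations listed there. Two small cautions: eliminating $x$ from only \emph{two} faces, as you suggest for H1--H3, does give a relation free of $x$ and $x_{ijk}$, but it misses $x_{ik}$ and is not the symmetric octahedron relation used later in Theorem~\ref{th: consistency from octahedron}, and for H2 a literal subtraction of two face equations does not cancel $x$ --- one must first solve for $x$ (divide by its coefficient $x_i-x_j+\alpha_j-\alpha_i$), as in your first step; also, your worry about a uniform criterion selecting these five equations is moot, since the proposition claims nothing about the remaining ABS equations, and 3D consistency is not needed either --- only the three face equations at $x$.
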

\begin{proof}
By inspection of the following list. Octahedron relations are obtained by adding/ multiplying three equations adjacent to $x$ in the form given in the list.
\medskip

\begin{description}
\item[Q1$_{\delta=0}$,] cross-ratio equation:
\[
\frac{(x-x_i)(x_{ij}-x_j)}{(x_i-x_{ij})(x_j-x)}=\frac{\alpha_i}{\alpha_j},
\]

 octahedron relation:
\[
\frac{(x_{ij}-x_i)(x_{jk}-x_j)(x_{ki}-x_k)}{(x_{ij}-x_j)(x_{jk}-x_k)(x_{ki}-x_i)}=1.
\]

\item[Q1$_{\delta=1}$,] shifted cross-ratio equation:
\[
\frac{(x-x_i+\alpha_i)(x_{ij}-x_j+\alpha_i)}{(x_i-x_{ij}-\alpha_j)(x_j-x-\alpha_j)}=
\frac{\alpha_i}{\alpha_j},
\]

 octahedron relation:
\[
\frac{(x_{ij}-x_i+\alpha_j)(x_{jk}-x_j+\alpha_k)(x_{ki}-x_k+\alpha_i)}
{(x_{ij}-x_j+\alpha_i)(x_{jk}-x_k+\alpha_j)(x_{ki}-x_i+\alpha_k)}=1.
\]

\item[Q3$_{\delta=0}$,] hyperbolic shifted cross-ratio equation:
\[
\frac{\sinh(x-x_i+\alpha_i)\sinh(x_{ij}-x_j+\alpha_i)}
{\sinh(x_i-x_{ij}-\alpha_j)\sinh(x_j-x-\alpha_j)}=
\frac{\sinh(2\alpha_i)}{\sinh(2\alpha_j)},
\]

octahedron relation:
\[
\frac{\sinh(x_{ij}-x_i+\alpha_j)\sinh(x_{jk}-x_j+\alpha_k)\sinh(x_{ki}-x_k+\alpha_i)}
{\sinh(x_{ij}-x_j+\alpha_i)\sinh(x_{jk}-x_k+\alpha_j)\sinh(x_{ki}-x_i+\alpha_k)}=1.
\]

\item[H1,] discrete KdV equation:
\[
(x-x_{ij})(x_i-x_j)=\alpha_i-\alpha_j,
\]

octahedron relation:
\[
x_{ij}(x_i-x_j)+x_{jk}(x_j-x_k)+x_{ki}(x_k-x_i)=0.
\]

\item[H2]:
\[
(x-x_{ij})(x_i-x_j)+(\alpha_j-\alpha_i)(x+x_i+x_j+x_{ij})+\alpha_j^2-\alpha_i^2=0,
\]

octahedron relation:
\[
x_{ij}(x_i-x_j+\alpha_i-\alpha_j)+x_{jk}(x_j-x_k+\alpha_j-\alpha_k)+x_{ki}(x_k-x_i+\alpha_k-\alpha_i)
\]
\[
+x_i(\alpha_k-\alpha_j)+x_j(\alpha_i-\alpha_k)+x_k(\alpha_j-\alpha_i)=0.
\]

\item[H3,] Hirota equation:
\[
\alpha_i(xx_i+x_jx_{ij})-\alpha_j(xx_j+x_ix_{ij})+\delta(\alpha_i^2-\alpha_j^2)=0,
\]

octahedron relation:
\[
\alpha_ix_jx_{ij}-\alpha_jx_ix_{ij}+\alpha_jx_kx_{jk}-\alpha_kx_jx_{jk}+
\alpha_kx_ix_{ki}-\alpha_ix_kx_{ki}=0.
\]
\end{description}
Observe that all these octahedron relations are (up to simple transformations of dependent variables) instances of the list of integrable 3D octahedron equations classified in \cite{ABS4}. However, not all items of the list show up in the present context, namely, only $(\chi_2)$ and $(\chi_3)$.
\end{proof}
\begin{corollary}
For the discrete 2-forms $\cL$ given in \cite{LN1, BS2} for quad-equations of the types Q1, Q3$_{\delta=0}$, H1, H2, H3, solutions of the corresponding systems of corner equations on an elementary 3D cube satisfy the respective octahedron relations.
\end{corollary}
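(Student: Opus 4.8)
The plan is to read this off from the proof of Theorem~\ref{Th: pluri quad consist}, combined with the Proposition above. Fix a triple of coordinate directions $i,j,k$ and let $(x_i,x_j,x_k,x_{ij},x_{jk},x_{ik})$ be an arbitrary solution of the system of corner equations on the corresponding elementary 3D cube. I want to realize it as the restriction, to the six vertices carrying one or two indices, of a solution of the quad-equations on that cube; the Corollary is then immediate from the Proposition, since every octahedron relation in the list involves only those six vertices and not $x$ or $x_{ijk}$.

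To produce such a quad-equation solution I reuse the reconstruction carried out in the proof of Theorem~\ref{Th: pluri quad consist}. Take as initial data four of the six fields, say $x_i,x_j,x_{ij},x_{ik}$ (the remaining two, $x_k$ and $x_{jk}$, being recovered from our corner solution via $(E_i)$ and $(E_{ij})$). Define the ``alien'' value $x$ by the three-leg equation~(\ref{eq: 3leg ij}) on the face $(x,x_i,x_{ij},x_j)$, and let $x_{ijk}$ be obtained from $x,x_i,x_j,x_k$ by 3D consistency of the system of quad-equations. As shown in that proof, the resulting eight-tuple solves all quad-equations on the cube, and its restriction to the one- and two-index vertices satisfies \emph{all} corner equations $(E_i),(E_j),(E_k),(E_{ij}),(E_{jk}),(E_{ik})$.

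It remains to check that this restriction is the very solution we started from. This is where consistency of the corner equations (Definition~\ref{def: corner eqs consist}) enters: a solution of the corner equations on the cube is determined by any four of its six values, and the reconstructed eight-tuple shares with our original solution the four initial values $x_i,x_j,x_{ij},x_{ik}$. Indeed $x_i,x_j$ are common inputs, while $x_{ij}$ and $x_{ik}$ agree because $x$ was defined so that the quad-equations on the faces $(x,x_i,x_{ij},x_j)$ and $(x,x_i,x_{ik},x_k)$ hold with the prescribed $x_{ij}$, $x_{ik}$ (the second of these is exactly the comparison of~(\ref{eq: corner i}) with~(\ref{eq: 3leg ij}) used in the proof of Theorem~\ref{Th: pluri quad consist}). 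Hence the two solutions coincide on all six one/two-index vertices. Since the eight-tuple solves the quad-equations on the cube, the Proposition yields the corresponding octahedron relation; as it does not involve $x$ or $x_{ijk}$, it is a relation satisfied by $(x_i,x_j,x_k,x_{ij},x_{jk},x_{ik})$ itself.

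I expect the only genuinely delicate point to be this identification of the reconstructed ``trace'' with the given corner solution — in essence a bookkeeping of which equation determines which field, already implicit in the proof of Theorem~\ref{Th: pluri quad consist} — together with the routine caveat that in the multiplicative cases Q1, Q3$_{\delta=0}$ the octahedron relation is a rational identity, so one first argues it on the open dense set of corner-equation solutions where no denominator occurring in the three quad-equations adjacent to $x$ vanishes, and then extends it to all solutions by continuity. For the polynomial cases H1, H2, H3 this last caveat is unnecessary.
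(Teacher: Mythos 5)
Your argument is correct and is exactly the one the paper intends: the Corollary is stated without proof because it follows immediately from the Proposition once one notes that, by the reconstruction in the proof of Theorem~\ref{Th: pluri quad consist}, every solution of the corner equations on a cube is the restriction to the one- and two-index vertices of a solution of the quad-equations on that cube. Your extra care in identifying the reconstructed trace with the given corner solution (via uniqueness of the extension from four initial values) and the genericity caveat for the multiplicative cases are sensible but only make explicit what the paper leaves implicit.
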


But actually, a much more detailed statement can be made. It provides us also with an alternative proof of consistency of the corresponding systems of corner equations.
\begin{theorem} \label{th: consistency from octahedron}
For the discrete 2-forms $\cL$ given in \cite{LN1, BS2} for quad-equations of the types Q1, Q3$_{\delta=0}$, H1, H2, H3:
\begin{itemize}
\item if any two of the corner equations are satisfied, then the respective octahedron relation is satisfied, as well;
\item if any one of the corner equations and the octahedron relation are satisfied, then all other five corner equations are satisfied, as well.
\end{itemize}
In other words, all six corner equations are equivalent modulo the octahedron relation.
\end{theorem}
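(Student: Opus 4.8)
The plan rests on two facts already in place. First, the representation of the corner equations recalled just before Theorem~\ref{Th: pluri quad consist}: $(E_i)$ is the difference of the three-leg forms, centered at $x_i$, of the quad-equations on the two faces $(x,x_i,x_{ij},x_j)$ and $(x,x_i,x_{ik},x_k)$ through $x$ (the cancelled term being the leg from $x_i$ to $x$), similarly for $(E_j)$ and $(E_k)$, while by the point symmetry of the cube $(E_{ij}),(E_{jk}),(E_{ik})$ are built in the same way from the three faces through $x_{ijk}$. Second, the mechanism from the proof of the Proposition: for each of Q1, Q3$_{\delta=0}$, H1, H2, H3 the octahedron relation is \emph{equivalent} to the equality of a fixed $x$-free combination of the left-hand sides of the quad-equations on the three faces through $x$ --- their sum in the additive families, their product in the multiplicative ones --- with the same combination of the right-hand sides. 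Consequently: if the quad-equations hold on \emph{two} of these three faces for \emph{some} value of $x$ (not necessarily one coming from the corner-equation data) and the octahedron relation holds, then the quad-equation on the third face holds too; and if all three hold, then the octahedron relation does.

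For the ``in other words'' reformulation, fix a corner equation --- by the cyclic permutations of $i,j,k$ and the point symmetry of the cube, which both fix the octahedron relation, we may take it to be $(E_i)$ --- and assume $(E_i)=0$ together with the octahedron relation. Introduce the ``alien'' value $x$ of Theorem~\ref{Th: pluri quad consist}, defined by imposing the quad-equation on $(x,x_i,x_{ij},x_j)$; comparing its three-leg form at $x_i$ with $(E_i)$, exactly as~(\ref{eq: 3leg ik}) is obtained from~(\ref{eq: corner i}), gives the quad-equation on $(x,x_i,x_{ik},x_k)$ for this $x$, and then the first paragraph supplies the third one, on $(x,x_j,x_{jk},x_k)$. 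With all three faces through $x$ now carrying their quad-equations, the 3D-consistency of the ABS system furnishes a unique alien value $x_{ijk}$ making the quad-equations on all six faces of the cube hold; since every corner equation is a difference of the three-leg forms of two of these six quad-equations, all six corner equations are then satisfied.

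For the first bullet, suppose two corner equations hold. If both lie in $\{(E_i),(E_j),(E_k)\}$, say $(E_i)$ and $(E_j)$, they share the face $(x,x_i,x_{ij},x_j)$; imposing the alien $x$ on that face, $(E_i)$ yields the quad-equation on $(x,x_i,x_{ik},x_k)$ and $(E_j)$ the one on $(x,x_j,x_{jk},x_k)$, so all three faces through $x$ carry their quad-equations and the octahedron relation follows. The same argument applies to any two of $\{(E_i),(E_j),(E_k)\}$, and symmetrically to any two of $\{(E_{ij}),(E_{jk}),(E_{ik})\}$. For a mixed pair one glues an alien $x$ and an alien $x_{ijk}$ across their common face and closes up by 3D-consistency as above; alternatively, since the corner system is already known to be consistent (Theorem~\ref{Th: pluri quad consist}), i.e.\ of rank exactly two, any two independent corner equations force all six, and the octahedron relation then follows from the Corollary.

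The step that genuinely requires going through the list --- as for the Proposition --- is the verification, for each of Q1, Q3$_{\delta=0}$, H1, H2, H3, that the octahedron relation is precisely the $x$-free combination of the three face quad-equations which can be ``run backwards'' (two of them plus the octahedron relation forcing the third), and that the auxiliary constructions are non-degenerate: the quad-equation on a face must actually determine the alien value (no vanishing leading coefficient), and in the multiplicative families no factor entering the octahedron relation may vanish on the configuration considered. I expect these mild genericity checks, together with the bookkeeping for the mixed pair, to be the only real obstacle, the combinatorial core being uniform across the list.
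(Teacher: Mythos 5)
Your argument is essentially correct, but it takes a genuinely different route from the paper's. The paper proves Theorem~\ref{th: consistency from octahedron} by a direct case-by-case elimination: solve $(E_i)$ for $x_{ik}$ in terms of $x_i,x_j,x_k,x_{ij}$, substitute into the octahedron relation, and check that the resulting five-point equation is algebraically equivalent to $(E_j)$ (carried out explicitly for Q1$_{\delta=0}$); this purely computational equivalence is what allows the paper to advertise the theorem as an \emph{alternative} proof of consistency, independent of the quad-equation machinery. You instead recycle the structural apparatus of Theorem~\ref{Th: pluri quad consist}: corner equations as differences of three-leg forms, the ``alien'' field $x$, 3D consistency, plus the observation that the octahedron relation is the $x$-eliminated combination of the three face equations through $x$ and that this combination can be run backwards (two faces plus the octahedron relation force the third). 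This buys a uniform conceptual explanation of \emph{why} the six corner equations are equivalent modulo the octahedron relation, at the price of (i) still requiring essentially the same case-by-case work, now in the form of the reversibility of the additive/multiplicative combination, the nondegeneracy of the alien-field construction and of the quad/three-leg equivalence, and the claimed invariance of the octahedron relation under the point symmetry of the cube, and (ii) leaning on the quad-equation structure and, for mixed pairs, on Theorem~\ref{Th: pluri quad consist} itself --- logically admissible since that theorem precedes this one, but it forfeits the ``independent proof of consistency'' aspect of the original argument. One concrete repair: for a mixed pair such as $(E_i)$ and $(E_{ij})$, the two faces underlying $(E_i)$ (both adjacent to $x$) and the two underlying $(E_{ij})$ (both adjacent to $x_{ijk}$) share no common face, so the ``glue an alien $x$ and an alien $x_{ijk}$ across their common face'' step does not work as written; your fallback via the already established consistency together with the Corollary is the correct way to handle that case.
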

\begin{proof}
By a direct case-by-case computation. Let us describe the computation needed to prove the equivalence of corner equations $(E_i)$ and $(E_j)$, say, modulo the octahedron relation. Equation $(E_i)$ can be solved for $x_{ik}$ in terms of the four fields $x_i$, $x_j$, $x_k$, $x_{ij}$. Substitution of this expression into the octahedron relation yields an equation which relates the five fields $x_i$, $x_j$, $x_k$, $x_{ij}$, and $x_{jk}$. One has to check that this equation is algebraically equivalent to $(E_j)$. We will illustrate this with the simplest example of the system of corner equations with
\[
L(x,y;\alpha)=\alpha\log|x-y|, \quad \Lambda(x,y;\alpha,\beta)=(\alpha-\beta)\log|x-y|,
\]
corresponding to the cross-ratio equation Q1$_{\delta=0}$. Corner equation
$(E_i)$ is given by
\begin{equation*}\label{eq: Q1 Ei}
 \frac{\alpha_j}{x_{ij}-x_i}-\frac{\alpha_k}{x_{ik}-x_i}-\frac{\alpha_j-\alpha_i}{x_j-x_i}+
 \frac{\alpha_k-\alpha_i}{x_k-x_i} = 0.
\end{equation*}
To express $x_{ik}$ through the remaining variables, one transforms this equation to
\begin{equation*}\label{eq: Q1 Ei trans}
 \frac{\alpha_k(x_{ik}-x_k)}{(x_{ik}-x_i)(x_k-x_i)}  =  \frac{\alpha_i(x_j-x_k)}{(x_k-x_i)(x_j-x_i)}+\frac{\alpha_j(x_{ij}-x_j)}{(x_{ij}-x_i)(x_j-x_i)}.
\end{equation*}
Multiplying this by the octahedron relation,
\[
\frac{(x_{ik}-x_i)(x_{jk}-x_k)}{(x_{ik}-x_k)(x_{jk}-x_j)} =
\frac{(x_{ij}-x_i)}{(x_{ij}-x_j)},
\]
one easily puts the result into the form
\begin{equation*}\label{eq: Q1 Ej trans}
    \frac{\alpha_k(x_{jk}-x_k)}{(x_{jk}-x_j)(x_k-x_j)} =  \frac{\alpha_i(x_{ij}-x_i)}{(x_{ij}-x_j)(x_i-x_j)}+\frac{\alpha_j(x_i-x_k)}{(x_k-x_j)(x_i-x_j)},
\end{equation*}
which, in turn, is easily transformable to equation
\begin{equation*}\label{eq: Q1 Ej}
 \frac{\alpha_k}{x_{jk}-x_j}-\frac{\alpha_i}{x_{ij}-x_j}-\frac{\alpha_k-\alpha_j}{x_k-x_j}+
 \frac{\alpha_i-\alpha_j}{x_i-x_j}  =  0,
\end{equation*}
which is corner equation $(E_j)$.
\end{proof}

\section{System of corner equations for a three-point 2-form not corresponding to quad-equations}
\label{sect: pluri non-quad}

In this section, we show that the general three-point ansatz (\ref{lag LN ij}) is not necessarily related to integrable quad-equations. Consider the Lagrangian 2-form (\ref{lag LN ij}) with the functions $L$ and $\Lambda$ defined by
\begin{equation}\label{eq: ex L}
\frac{\partial}{\partial x} L(x,y;\alpha)=\log\left(\alpha-e^{y-x}\right)=
-\frac{\partial}{\partial y}L(x,y;\alpha)
\end{equation}
and
\begin{equation}\label{eq: ex Lambda}
\frac{\partial}{\partial x}\Lambda(x,y;\alpha,\beta)=
\log\frac{\alpha-\beta e^{y-x}}{\beta-\alpha e^{y-x}}=
-\frac{\partial}{\partial y}\Lambda(x,y;\alpha,\beta).
\end{equation}

We mention that the discrete time relativistic Toda system derived from this 2-form would be
\[
\frac{e^{\widetilde{x}_k-x_k}-\alpha}{e^{x_k-\undertilde{x}_k}-\alpha}=
\frac{e^{x_{k+1}-x_k}-\beta}{e^{x_k-x_{k-1}}-\beta}\cdot
\frac{\beta e^{\undertilde{x}_{k+1}-x_k}-\alpha}{\alpha e^{\undertilde{x}_{k+1}-x_k}-\beta}\cdot
\frac{\alpha e^{x_k-\widetilde{x}_{k-1}}-\beta}{\beta e^{x_k-\widetilde{x}_{k-1}}-\alpha}.
\]

The corner equation for this 2-form are conveniently written in terms of the field variable $X=e^x$, and they read:
\begin{equation}\label{eq: ex Ei}\tag{$E_i$}
\frac{\alpha_{j}X_{i}-X_{ij}}{X_i}\cdot
\frac{\alpha_{i}X_{i}-\alpha_{j}X_{j}}{\alpha_{j}X_{i}-\alpha_{i}X_{j}}
\cdot\frac{X_i}{\alpha_{k}X_{i}-X_{ik}}\cdot
\frac{\alpha_{k}X_{i}-\alpha_{i}X_{k}}{\alpha_{i}X_{i}-\alpha_{k}X_{k}}=1,
\end{equation}
and
\begin{equation}\label{eq: ex Eij}\tag{$E_{ij}$}
\frac{\alpha_{j}X_{i}-X_{ij}}{X_{i}}\cdot
\frac{\alpha_{j}X_{ij}-\alpha_{k}X_{ik}}{\alpha_{k}X_{ij}-\alpha_{j}X_{ik}}\cdot
\frac{X_{j}}{\alpha_{i}X_{j}-X_{ij}}\cdot
\frac{\alpha_{k}X_{ij}-\alpha_{i}X_{jk}}{\alpha_{i}X_{ij}-\alpha_{k}X_{jk}}=1.
\end{equation}
\begin{theorem}\label{th: example}
The system of corner equations (\ref{eq: ex Ei}), (\ref{eq: ex Eij}) is consistent. More precisely, any two of the corner equations within one elementary 3D cube are equivalent modulo {\em octahedron relation}
\begin{equation}\label{eq: ex octahedron}
\frac{\alpha_{j}X_{ij}-\alpha_{k}X_{ik}}{X_{i}}+
\frac{\alpha_{k}X_{jk}-\alpha_{i}X_{ij}}{X_{j}}+
\frac{\alpha_{i}X_{ik}-\alpha_{j}X_{jk}}{X_{k}}=0.
\end{equation}
The discrete 2-form $\cL$ is closed on solutions of the system of corner equations.
\end{theorem}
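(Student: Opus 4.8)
The plan is to mimic the structure of the proof of Theorem \ref{Th: pluri quad consist}, but with the octahedron relation (\ref{eq: ex octahedron}) now playing the role that 3D consistency of quad-equations played there. First I would establish the second sentence of the theorem (``any two corner equations are equivalent modulo (\ref{eq: ex octahedron})'') by a direct computation exactly as in the proof of Theorem \ref{th: consistency from octahedron}: solve $(E_i)$ for $X_{ik}$ in terms of $X_i,X_j,X_k,X_{ij}$ (equation (\ref{eq: ex Ei}) is multilinear in $X_{ik}$, so this is a rational expression), substitute into (\ref{eq: ex octahedron}), and check that the resulting relation among $X_i,X_j,X_k,X_{ij},X_{jk}$ is algebraically equivalent to $(E_j)$. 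By the manifest cyclic symmetry $i\to j\to k\to i$ of (\ref{eq: ex octahedron}) — and the corresponding symmetry of the list $\{(E_i),(E_j),(E_k),(E_{ij}),(E_{jk}),(E_{ik})\}$ — it suffices to verify one such pair, e.g.\ $(E_i)\Leftrightarrow(E_j)$ and $(E_i)\Leftrightarrow(E_{jk})$, and the rest follow by relabelling. This shows the six corner equations all cut out the same codimension-one locus inside the six-dimensional space of $(X_i,X_j,X_k,X_{ij},X_{jk},X_{ik})$: the variety $\{(\ref{eq: ex octahedron})=0\}$. In particular the rank of the system on one cube is $1$ in the appropriate sense, and one may take any four of the six fields as initial data, use one corner equation (equivalently the octahedron relation) to fix a fifth, and then a second corner equation to fix the sixth; the remaining four are then automatically satisfied. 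Hence consistency in the sense of the (modified) Definition \ref{def: corner eqs consist}.

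For the closure statement I would invoke Theorem \ref{Th: almost closed}: once consistency is known, $S^{ijk}$ is constant on the (connected) solution manifold of the corner equations, so $S^{ijk}=c^{ijk}$ for some constants, and it remains to show $c^{ijk}=0$. Since the constant can be computed on \emph{any} solution, I would exhibit a convenient one-parameter or otherwise degenerate family of solutions of (\ref{eq: ex octahedron}) on which $S^{ijk}$ is visibly zero — for instance a constant field $X_i=X_j=\cdots=X_{ijk}=c$, or $x$ linear in the lattice variables ($X_n = \prod \lambda_\ell^{\,n_\ell}$), provided such a field solves the octahedron relation and hence all corner equations. On such a field the explicit formula (\ref{eq: dL}) for $S^{ijk}$, with $L,\Lambda$ from (\ref{eq: ex L})--(\ref{eq: ex Lambda}), should collapse: the six $L$-terms pair up into differences of equal quantities and the six $\Lambda$-terms likewise, by the antisymmetry $L(x,y;\alpha)=-L(y,x;\alpha)$ and the cyclic structure, so $S^{ijk}=0$ identically on that family. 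This forces $c^{ijk}=0$, i.e.\ $d\cL=0$ on all solutions of the system of corner equations.

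The one point requiring care is producing an explicit solution of (\ref{eq: ex octahedron}) on which $S^{ijk}$ is computable in closed form; a constant field is the safest choice, since then every factor such as $\alpha_j X_i - X_{ij}$ becomes $(\alpha_j-1)c$ and every corner equation (\ref{eq: ex Ei}), (\ref{eq: ex Eij}) reduces to $1=1$, while each summand of (\ref{eq: ex octahedron}) becomes $(\alpha_j-\alpha_k)+(\alpha_k-\alpha_i)+(\alpha_i-\alpha_j)=0$ — though one should double-check that the original (pre-substitution $X=e^x$) Lagrangian is well-defined there, or else simply work with $S^{ijk}$ in the form (\ref{eq: dL}) directly. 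Alternatively, one avoids this altogether by noting that the difference of (\ref{eq: ex Ei})/(\ref{eq: ex Eij}) from the corresponding three-leg forms is structurally identical to the quad-equation situation, and transplanting the argument of the proof of Theorem \ref{Th: pluri quad consist}, with the octahedron relation (\ref{eq: ex octahedron}) — which should coincide, up to a change of variables, with one of the entries $(\chi_2)$ or $(\chi_3)$ of the classification in \cite{ABS4} — supplying the needed ``completion'' to a consistent configuration.

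The main obstacle I expect is purely computational: carrying out the substitution-and-simplification establishing $(E_i)\Leftrightarrow(E_j)$ modulo (\ref{eq: ex octahedron}) cleanly. The rational expressions are of moderate degree and the cyclic symmetry cuts the work by a factor of roughly three, but clearing denominators and recognizing the result as $(E_j)$ after cancellation is the step where a hidden sign or an accidental spurious factor could appear; one should keep track of which factors are allowed to be cancelled (they are nonzero on the relevant locus) to be sure the equivalence is genuine and not merely up to extraneous components.
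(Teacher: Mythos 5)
Your treatment of the consistency statement is essentially the paper's: a direct computation in the style of Theorem \ref{th: consistency from octahedron}, solving $(E_i)$ for $X_{ik}$, combining with (\ref{eq: ex octahedron}), and recognizing $(E_j)$. That part is fine, although one slip should be flagged: the six corner equations do \emph{not} all cut out the octahedron hypersurface, and the system is not of rank $1$. The correct picture (and the one your own initial-data count then uses) is rank $2$: any two corner equations imply the octahedron relation, and the octahedron relation \emph{plus one} corner equation implies the rest. The octahedron relation alone does not imply any corner equation --- see the next paragraph for an explicit witness.

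The closedness argument has a genuine gap. Your proposed test solution, the constant field $X\equiv c$, satisfies (\ref{eq: ex octahedron}) (each summand is $c$ times a telescoping difference of parameters) but does \emph{not} satisfy the corner equations: substituting into (\ref{eq: ex Ei}) the four factors become $(\alpha_j-1)$, $-1$, $(\alpha_k-1)^{-1}$, $-1$, so the left-hand side equals $(\alpha_j-1)/(\alpha_k-1)\neq 1$ for generic parameters. The inference ``solves the octahedron relation and hence all corner equations'' is exactly the false implication noted above, so the constant $c^{ijk}$ cannot be evaluated this way; the multiplicative ansatz $X_n=\prod_\ell\lambda_\ell^{n_\ell}$ fails for the same reason. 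Your fallback --- transplanting the proof of Theorem \ref{Th: pluri quad consist} via three-leg forms --- is precisely what Proposition \ref{prop: ex non-ex} rules out: for this 2-form the corner equations admit \emph{no} representation as quotients of three-leg forms of multi-affine quad-equations (and the octahedron relation here is of type $(\chi_4)$, not $(\chi_2)$ or $(\chi_3)$), which is the whole point of this section. The paper's actual route is a $\gamma$-deformation: perturb $L$ so that the deformed corner equations \emph{do} arise from the multi-affine quad-equation (\ref{eq: ex Q3}), a rescaled Q3$_{\delta=0}$, deduce closedness for $\gamma\neq 0$ from Theorem \ref{Th: pluri quad consist}, and pass to the regular limit $\gamma\to 0$. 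Some such detour (or the explicit production of a genuine solution of the corner equations on which the dilogarithm-type expression (\ref{eq: dL}) can be evaluated) is needed; as written, your argument does not establish $c^{ijk}=0$.
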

\begin{proof}
Consistency follows from the second claim, regarding the octahedron relation. The latter is verified by a direct computation, as outlined in the proof of Theorem \ref{th: consistency from octahedron}. Observe the appearance of the octahedron equation of the type $(\chi_4)$ in this example! The last statement, regarding the closedness of the 2-form $\cL$, will be proved at the end of the present section.
\end{proof}

\begin{proposition}\label{prop: ex non-ex}
Corner equations (\ref{eq: ex Ei}) and (\ref{eq: ex Eij}) cannot be represented as quotients of the three-leg forms of quad-equations, multi-affine w.r.t. the variables $X$.
\end{proposition}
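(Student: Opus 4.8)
\emph{Plan.} I would argue by contradiction. Suppose there were a multi-affine quad-equation $Q(\,\cdot\,,\,\cdot\,,\,\cdot\,,\,\cdot\,;\alpha,\beta)$ whose three-leg forms reproduce the corner equations of the present $2$-form over an elementary $3$D cube, exactly in the way they do for the ABS Lagrangians in Theorem~\ref{Th: pluri quad consist}. Working throughout in the variables $X=e^{x}$, this means there are ``short-leg'' functions $\psi(\,\cdot\,,\,\cdot\,;\gamma)$, attached to an edge of parameter $\gamma$, and a ``long-leg'' function $\phi(\,\cdot\,,\,\cdot\,;\gamma,\delta)$, attached to a face diagonal, such that the three-leg form of $Q$ on a face, centred at a vertex, is the product of the two short legs of the edges through that vertex and the long leg of the diagonal, set to $1$ (one short leg appearing with the opposite exponent, as the multiplicative analogue of the term $-\psi(x_{i},x;\alpha_{i})$ in (\ref{eq: 3leg ij})); and that $(\ref{eq: ex Ei})$ is the quotient of the three-leg forms of $Q$ centred at $X_{i}$ on the faces $(X,X_{i},X_{ij},X_{j})$ and $(X,X_{i},X_{ik},X_{k})$ — the common short leg $\psi(X_{i},X;\alpha_{i})$ cancelling — while $(\ref{eq: ex Eij})$ is the quotient of the three-leg forms of $Q$ centred at $X_{ij}$ on the faces $(X_{i},X_{ij},X_{ijk},X_{ik})$ and $(X_{j},X_{ij},X_{ijk},X_{jk})$ — the common short leg $\psi(X_{ij},X_{ijk};\alpha_{k})$ cancelling.

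Next I would read off the short-leg function $\psi(\,\cdot\,,\,\cdot\,;\alpha_{j})$ of the edge $X_{i}X_{ij}$ from each of these two corner equations. In the quotient representing $(\ref{eq: ex Ei})$ the only factor that depends on $X_{ij}$ is the short leg $\psi(X_{i},X_{ij};\alpha_{j})$ of that edge, since the $\alpha_{i}$-short-leg to $X$, the two long legs to $X_{j}$ and to $X_{k}$, and the $\alpha_{k}$-short-leg to $X_{ik}$ are all free of $X_{ij}$; but the only factor of $(\ref{eq: ex Ei})$ \emph{as written} that depends on $X_{ij}$ is $(\alpha_{j}X_{i}-X_{ij})/X_{i}$. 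Hence $\psi(X_{i},X_{ij};\alpha_{j})$ is a $\pm1$-power of $\alpha_{j}X_{i}-X_{ij}$ times a factor of $X_{i}$ alone; in particular $\psi(\,\cdot\,,\,\cdot\,;\alpha_{j})$ is, as a function of its second argument, a $\pm1$-power of a single affine function. The same bookkeeping applied to $(\ref{eq: ex Eij})$: the only factor of that three-leg quotient depending on $X_{i}$ is the short leg $\psi(X_{ij},X_{i};\alpha_{j})$ of the same edge, now based at $X_{ij}$ and lying on the $(jk)$-face through $X_{i}$ — but carrying the same universal function $\psi(\,\cdot\,,\,\cdot\,;\alpha_{j})$, since $Q$ is one fixed quad-equation — and the only $X_{i}$-dependent factor of $(\ref{eq: ex Eij})$ as written is, again, $(\alpha_{j}X_{i}-X_{ij})/X_{i}$. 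Hence $\psi(X_{ij},X_{i};\alpha_{j})$ is a $\pm1$-power of $(\alpha_{j}X_{i}-X_{ij})/X_{i}$ times a factor of $X_{ij}$ alone.

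These two descriptions of the same function are incompatible, which finishes the proof. From $(\ref{eq: ex Ei})$, substituting $X_{ij}$ for the first argument and $X_{i}$ for the second, $\psi(X_{ij},X_{i};\alpha_{j})$ is (up to a factor of $X_{ij}$ alone) a $\pm1$-power of the single affine function $\alpha_{j}X_{ij}-X_{i}$ of $X_{i}$, hence has in $X_{i}$ either a zero and no pole, or a pole and no zero. From $(\ref{eq: ex Eij})$ the same quantity is (up to a factor of $X_{ij}$ alone) a $\pm1$-power of $(\alpha_{j}X_{i}-X_{ij})/X_{i}$, whose numerator and denominator are non-proportional affine functions of $X_{i}$, so it has in $X_{i}$ both a zero and a pole — a contradiction. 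The only escape would be that $\psi(\,\cdot\,,\,\cdot\,;\alpha_{j})$ does not depend on its second argument at all, but then the three-leg quotient reproducing $(\ref{eq: ex Ei})$ would be independent of $X_{ij}$, while $(\ref{eq: ex Ei})$ manifestly is not.

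I expect the only delicate point to be the combinatorial bookkeeping of the first paragraph: identifying which leg of each three-leg form carries which edge parameter and checking that the ``cancelling'' legs — at $X$ for $(\ref{eq: ex Ei})$, at $X_{ijk}$ for $(\ref{eq: ex Eij})$ — really match, so that each corner equation has the stated quotient shape. This is precisely the combinatorics already carried out in the proof of Theorem~\ref{Th: pluri quad consist}, so it should be routine here; once it is in place, the impossibility follows from the elementary remark above on the zero/pole pattern of $\psi$ in $X_{i}$. An equivalent but longer alternative would reconstruct, from $(\ref{eq: ex Ei})$, a candidate multi-affine quad-equation on $(X,X_{i},X_{ij},X_{j})$ by clearing denominators in the corresponding three-leg form, and then check that $(\ref{eq: ex Eij})$ forces an incompatible form for the same universal $Q$; the leg-function argument avoids that case work.
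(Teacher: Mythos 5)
Your bookkeeping of which faces and which cancelling legs produce (\ref{eq: ex Ei}) and (\ref{eq: ex Eij}) is correct, but the reduction to a single universal short-leg function is where the argument parts company with what the proposition assumes, and this is a genuine gap. The identification ``carrying the same universal function $\psi(\cdot,\cdot;\alpha_j)$, since $Q$ is one fixed quad-equation'' does not follow from the hypothesis that the corner equations are quotients of three-leg forms of multi-affine quad-equations. The two legs you compare along the edge $(x_i,x_{ij})$ belong to three-leg forms of two \emph{different} face equations --- the $(ij)$-face equation with parameters $(\alpha_i,\alpha_j)$ in (\ref{eq: ex Ei}) and the $(jk)$-face equation with parameters $(\alpha_j,\alpha_k)$ in (\ref{eq: ex Eij}) --- centred at the two different endpoints of that edge. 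That these legs are one and the same function of two arguments (up to swapping them and a gauge factor in the centre variable) is a special feature of the ABS list: it combines the square symmetries of $Q$, which make the legs at different vertices coincide, with the nontrivial fact that the short leg depends on the edge parameter alone. Neither property is contained in ``multi-affine quad-equation'', and the cancellations that the quotient structure does force (equality of the two $X$-legs in (\ref{eq: ex Ei}), of the two $X_{ijk}$-legs in (\ref{eq: ex Eij})) do not yield it. Consequently your zero/pole contradiction only excludes representations with this ABS-type universal-leg ansatz; tellingly, multi-affineness --- the one hypothesis the proposition actually states --- is never used in your argument, so a representation by multi-affine quad-equations whose legs lack the universal structure would slip through.

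The paper avoids this by comparing two three-leg forms of the \emph{same} equation at the \emph{same} vertex: from (\ref{eq: ex Ei}) it reads off the three-leg form on the face $(x,x_i,x_{ik},x_k)$ centred at $x_i$ with an unknown leg $\psi_1(X_i,X)$, and from (\ref{eq: ex Eij}) the three-leg form on $(x_j,x_{ij},x_{ijk},x_{jk})$ centred at $x_{ij}$ with an unknown leg $\psi_2(X_{ij},X_{ijk})$; downshifting the latter in direction $j$ (only shift-invariance of the face equations is used, no symmetry of $Q$ and no universality of legs) places both on the same $(ik)$-face, equating them separates variables and determines $\psi_1,\psi_2$ up to a single function $\psi(X_i)$, and the contradiction is that the reconstructed equation
\[
(\alpha_kX_i-X_{ik})(\alpha_iX_i-\alpha_kX_k)(\alpha_iX-X_i)=(\alpha_kX_i-\alpha_iX_k)\,X\,\psi(X_i)
\]
cannot be made affine in $X_i$ for any choice of $\psi(X_i)$ --- this is exactly where multi-affineness enters. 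To repair your proof you would either have to derive the cross-face identification of the $\alpha_j$-leg from the stated hypotheses (which I do not see how to do), or follow the shift-and-compare route and finish with the degree count in $X_i$.
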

\begin{proof}
The three-leg form of the quad-equation on the square $(x,x_i,x_{ik},x_k)$ following from (\ref{eq: ex Ei}) would be
\begin{equation}\label{eq: proof aux1}
(\alpha_{k}X_{i}-X_{ik})\cdot
\frac{\alpha_{i}X_{i}-\alpha_{k}X_{k}}{\alpha_{k}X_{i}-\alpha_{i}X_{k}}\cdot \psi_1(X_i,X)=1,
\end{equation}
while the three-leg form of the equation on the square $(x_j,x_{ij},x_{ijk},x_{jk})$ following from (\ref{eq: ex Eij}) would be
\[
\frac{\alpha_{i}X_{j}-X_{ij}}{X_{j}}\cdot
\frac{\alpha_{i}X_{ij}-\alpha_{k}X_{jk}}{\alpha_{k}X_{ij}-\alpha_{i}X_{jk}}\cdot \psi_2(X_{ij},X_{ijk})=1.
\]
The latter equation, downshifted in the coordinate direction $j$, reads
\begin{equation}\label{eq: proof aux2}
\frac{\alpha_{i}X-X_{i}}{X}\cdot
\frac{\alpha_{i}X_{i}-\alpha_{k}X_{k}}{\alpha_{k}X_{i}-\alpha_{i}X_{k}}\cdot \psi_2(X_{i},X_{ik})=1.
\end{equation}
Comparing (\ref{eq: proof aux1}) and (\ref{eq: proof aux2}), we see that
\[
\frac{\psi_1(X_i,X)X}{\alpha_iX-X_i}=\frac{\psi_2(X_i,X_{ik})}{\alpha_kX_i-X_{ik}}.
\]
Thus, both sides of this identity should be independent of $X$ and of $X_{ik}$, i.e., both should be equal to some $\psi(X_i)$. Thus, we arrive at the following hypothetic form of the quad-equation on the square $(x,x_i,x_{ik},x_k)$:
\[
(\alpha_kX_i-X_{ik})(\alpha_iX_i-\alpha_kX_k)(\alpha_i X-X_i)=(\alpha_kX_i-\alpha_iX_k)X\psi(X_i).
\]
However, there is no choice of $\psi(X_i)$ which would make this to an affine polynomial in $X_i$.
\end{proof}

An explanation of this surprising non-existence result is as follows. Consider a three-point Lagrangian 2-form (\ref{lag LN ij}) with the function $\Lambda$ given by (\ref{eq: ex Lambda}) and the function $L$ being a regular one-parameter perturbation of (\ref{eq: ex L}) defined by
\[
\frac{\partial}{\partial X}L\left(x,y;\alpha\right)=
\log\frac{\alpha-e^{y-x}}{1-\gamma\alpha e^{y-x}}=-\frac{\partial}{\partial y}L\left(x,y;\alpha\right).
\]
Then also the corner equations become regular perturbations of (\ref{eq: ex Ei}) and (\ref{eq: ex Eij}): the first turns into
\begin{equation}\tag{$E_i$}
\frac{\alpha_{j}X_{i}-X_{ij}}{X_{i}-\gamma\alpha_{j}X_{ij}}\cdot
\frac{\alpha_{i}X_{i}-\alpha_{j}X_{j}}{\alpha_{j}X_{i}-\alpha_{i}X_{j}}\cdot
\frac{X_{i}-\gamma\alpha_{k}X_{ik}}{\alpha_{k}X_{i}-X_{ik}}\cdot
\frac{\alpha_{k}X_{i}-\alpha_{i}X_{k}}{\alpha_{i}X_{i}-\alpha_{k}X_{k}}=1,
\end{equation}
while the second turns into
\begin{equation}\tag{$E_{ij}$} 
\frac{\alpha_{j}X_{i}-X_{ij}}{X_{i}-\gamma\alpha_{j}X_{ij}}\cdot
\frac{\alpha_{j}X_{ij}-\alpha_{k}X_{ik}}{\alpha_{k}X_{ij}-\alpha_{j}X_{ik}}\cdot
\frac{X_{j}-\gamma\alpha_{i}X_{ij}}{\alpha_{i}X_{j}-X_{ij}}\cdot
\frac{\alpha_{k}X_{ij}-\alpha_{i}X_{jk}}{\alpha_{i}X_{ij}-\alpha_{k}X_{jk}}=1.
\end{equation}
These equations {\em can} be written as quotients of three-leg forms of multi-affine quad-equations:
\begin{multline}
 \left(\gamma\cdot\frac{\alpha_{j}X_{i}-X_{ij}}{X_{i}-\gamma\alpha_{j}X_{ij}}\cdot
\frac{\alpha_{i}X_{i}-\alpha_{j}X_{j}}{\alpha_{j}X_{i}-\alpha_{i}X_{j}}\cdot
\frac{\alpha_{i}X-X_{i}}{X-\gamma \alpha_{i}X_{i}}\right)  \nonumber\\
\cdot\left(\gamma\cdot\frac{\alpha_{k}X_{i}-X_{ik}}{X_{i}-\gamma\alpha_{k}X_{ik}}\cdot
\frac{\alpha_{i}X_{i}-\alpha_{k}X_{k}}{\alpha_{k}X_{i}-\alpha_{i}X_{k}}\cdot
\frac{\alpha_{i}X-X_{i}}{X-\gamma \alpha_{i}X_{i}}\right)^{-1}=1, \qquad\tag{$E_i$}
\end{multline}
and
\begin{multline}
 \left(\gamma\cdot\frac{\alpha_{j}X_{i}-X_{ij}}{X_{i}-\gamma\alpha_{j}X_{ij}}\cdot
\frac{\alpha_{j}X_{ij}-\alpha_{k}X_{ik}}{\alpha_{k}X_{ij}-\alpha_{j}X_{ik}}\cdot
\frac{\alpha_{k}X_{ij}-X_{ijk}}{X_{ij}-\gamma\alpha_{k}X_{ijk}}\right)\\
\cdot\left(\gamma\cdot\frac{\alpha_{i}X_{j}-X_{ij}}{X_{j}-\gamma\alpha_{i}X_{ij}}\cdot
\frac{\alpha_{i}X_{ij}-\alpha_{k}X_{jk}}{\alpha_{k}X_{ij}-\alpha_{i}X_{jk}}\cdot
\frac{\alpha_{k}X_{ij}-X_{ijk}}{X_{ij}-\gamma\alpha_{k}X_{ijk}}
\right)^{-1}=1. \qquad \tag{$E_{ij}$}
\end{multline}
Indeed, equation
\begin{equation}\label{eq: ex 3leg yes}
\gamma\cdot\frac{\alpha_{k}X_{i}-X_{ik}}{X_{i}-\gamma\alpha_{k}X_{ik}}\cdot
\frac{\alpha_{i}X_{i}-\alpha_{k}X_{k}}{\alpha_{k}X_{i}-\alpha_{i}X_{k}}\cdot
\frac{\alpha_{i}X-X_{i}}{X-\gamma \alpha_{i}X_{i}}=1
\end{equation}
is equivalent to the multi-affine quad-equation
\begin{multline}\label{eq: ex Q3}
\alpha_{k}\left(1-\gamma\alpha_{i}^{2}\right)\left(XX_{i}+\gamma X_{k}X_{ik}\right)-
\alpha_{i}\left(1-\gamma\alpha_{k}^{2}\right)\left(XX_{k}+\gamma X_{i}X_{ik}\right)\\
-\gamma\left(\alpha_{k}^{2}-\alpha_{i}^{2}\right)\left(XX_{ik}+X_{i}X_{k}\right)=0.
\end{multline}

Equation (\ref{eq: ex Q3}) is an obviously re-scaled form of Q3$_{\delta=0}$, so that the closedness of the 2-form $\cL$ for the $\gamma$-deformed system of corner equations follows from the results of \cite{LN1, BS2}, as described in the proof of Theorem \ref{Th: pluri quad consist}. Since the $\gamma$-deformation of both the 2-form $\cL$ and the corner equations is regular, the closedness of the two-form stated in Theorem \ref{th: example} follows.

On the other hand, the three-leg form (\ref{eq: ex 3leg yes}) is the only object in this construction whose $\gamma$-deformation is singular, which explains the non-existence of the three-leg forms in Proposition \ref{prop: ex non-ex}.

As the last remark in this section, we mention that consistency of $\gamma$-deformed corner equations is governed by the following octahedron relation:
\[
\frac{X_{i}-\gamma\alpha_{j}X_{ij}}{X_{j}-\gamma\alpha_{i}X_{ij}}\cdot
\frac{X_{j}-\gamma\alpha_{k}X_{jk}}{X_{k}-\gamma\alpha_{j}X_{jk}}\cdot
\frac{X_{k}-\gamma\alpha_{i}X_{ik}}{X_{i}-\gamma\alpha_{k}X_{ik}}=1.
\]
Also for this element of the construction, the $\gamma$-deformation is regular, and the limit $\gamma\to 0$ leads to the octahedron relation (\ref{eq: ex octahedron}).

\section{Conclusions}

In this paper, we formulated the notion of consistent pluri-Lagrangian systems, which, in our view, should be treated as an answer to the question posed in the title. However, much research is still to be done to justify this proposal. In particular, the relation to more common notions of integrability has only been demonstrated for one-dimensional systems \cite{S}. We expect similar results for two-dimensional systems, as well.

The new definition seems to be capable of being put at the basis of the classification task. Classifying all consistent systems of corner equations (within a certain ansatz for the discrete 2-form) is therefore an extremely important problem.

It would be also very useful to elaborate on the precise relations of the new notion with its predecessors mentioned in the introduction (theory of pluriharmonic functions; Z-invariant models of statistical mechanics and their quasiclassical limit; variational symmetries). All this will be the subject of our ongoing research.
\medskip

This research is supported by the DFG Collaborative Research Center TRR 109 ``Discretization in Geometry and Dynamics''.

%%%%%%%%%%%%%%%%%%%%%%%%%%%%%%%%%%%%%%%%%%%%%%%%%%%%%%%%%%%%%%%%%%%%%%%%%%%%%%%%%%%%%%%

\bibliographystyle{amsalpha}

\end{document}